\definecolor{Gray}{gray}{0.2}
\definecolor{Black}{gray}{0.0}
\definecolor{DarkGreen}{rgb}{0,0.5,0}
\newcommand{\subparagraph}{}
\setlist[itemize]{leftmargin=*}
\setlist[enumerate]{leftmargin=*}
\newcommand{\sysn}{\texttt{Concise}\xspace}
\newcommand{\boldsysn}{\textbf{Concise}\xspace}
\newcommand{\sysnintitlelarge}{Concise\xspace}
\newcommand{\sysnintitleLarge}{Concise\xspace}
\newcommand{\dtsn}{Othello\xspace}
\newcommand{\dtsnintitlelarge}{Othello\xspace}
\newcommand{\dtsnintitleLarge}{Othello\xspace}
\newtheorem{theorem}{Theorem}
\newcommand{\qinsays}[2][]{}
\newcommand{\yysays}[1]{\hspace{0pt}}
\newcommand{\dbsays}[1]{}
\newcommand{\cc}{{\mathtt{cc}}}
\newcommand{\abs}[1]{\left| #1 \right|}
\newcommand{\arra}{\boldsymbol{a}}
\newcommand{\arrb}{\boldsymbol{b}}
\newcommand{\othO}{\boldsymbol{O}}
\newcommand{\crunchBeforeTriFigCap}{\vspace{-0.0ex}}
\newcommand{\crunchAfterColFig}{\vspace{0ex}}
\newcommand{\crunchitemize}{\vspace{-0ex}}
\newcommand{\crunchitemizelarge}{\vspace{-0ex}}
\newcommand{\crunchASS}{\vspace{-0ex}}
\newcommand{\crunchBSS}{\vspace{-0ex}}
\newenvironment{TON}{ \color{black} }{ \color{black} }
\begin{document}

\newcommand{\mysubsubsection}[1] {
    \subsubsection{#1}
}
\renewcommand{\paragraph}[1]{

\noindent \textbf{#1}

}
\newcommand{\superscript}[1]{\ensuremath{^{\textrm{#1}}}}
\def\sharedaffiliation{\end{tabular}\newline\begin{tabular}{c}}

\newcommand{\widthinTriColumn}{0.72\linewidth}
\newcommand{\indentinbox}{\noindent \hangafter=1 \setlength{\hangindent}{1em}}

\title{\LARGE \bf  Memory-efficient and Ultra-fast Network Lookup and Forwarding using Othello Hashing}

\author{Ye~Yu, \emph{Student Member, IEEE and ACM,}
        Djamal~Belazzougui,
        Chen~Qian, \emph{Member, IEEE and ACM,}
         and Qin~Zhang%
        \thanks{Ye Yu (ye.yu@uky.edu) is with University of Kentucky.
        Djamal Belazzougui(dbelazzougui@cerist.dz) is with CERIST.
Chen Qian (qian@ucsc.edu) is with University of California Santa Cruz.
 Qin Zhang (qzhangcs@indiana.edu) is with Indiana University Bloomington.  Chen Qian is the corresponding author.

Ye Yu and Chen Qian were   supported by National Science Foundation Grants  CNS-1701681 and CNS-1717948. Qin Zhang was supported in part by NSF CCF-1525024 and IIS-1633215.

         A preliminary version of this paper was published in \emph{Proceedings of IEEE ICNP 2017 \cite{ConciseICNP}}. }
}

\maketitle

\begin{abstract}
Network algorithms always prefer low memory cost and fast packet processing speed. Forwarding information base (FIB), as a typical network processing component, requires a scalable and memory-efficient algorithm to support fast lookups. In this paper, we present a new network algorithm, Othello Hashing, and its application of a FIB design called Concise, which uses very little memory to support ultra-fast lookups of network names. Othello Hashing and Concise make use of minimal perfect hashing and relies on the programmable network framework to support dynamic updates. Our conceptual contribution of Concise is to optimize the memory efficiency and query speed in the data plane and move the relatively complex construction and update components to the resource-rich control plane. We implemented Concise on three platforms. Experimental results show that Concise uses significantly smaller memory to achieve much faster query speed compared to existing solutions of network name lookups.

\end{abstract}

\label{sec:intro}
\hyphenation{Cu-ckoo-Switch Scale-Bricks}
Significant efforts have been devoted to the investigation and
deployment of new network technologies in order to simplify network management and to accommodate emerging network applications. Though different proposals of new network technologies focus on a wide range of issues, one consensus of most new network  designs is the separation of network identifiers and locators \cite{Saltzer}, which are combined in IP addresses in the current Internet. Instead of IP, flat-name or namespace-neutral architectures have been proposed to provide persistent network identifers. A flat or location-independent namespace has no inherent structure and hence imposes no restrictions to referenced elements \cite{layered}.

The Salter's taxonomy of network elements \cite{Saltzer} is one of the early proposals that suggest the separation of network identifiers and locators.
We summarize an (incomplete) list of reasons of using flat or location-independent names in proposed network architectures:
\begin{itemize}[leftmargin=*]
\crunchitemize
\item To simplify network management, pure layer-two Ethernet is suggested to interconnect large-scale enterprise and data center networks\mbox{\cite{Seattle, Greenberg2009,PAST}}, where MAC addresses are identifiers.
\crunchitemize
\item Software Defined Networking (SDN)  uses matching of multiple fields in packet header space to perform fine-grained  per-flow control. Flow IDs can also be considered names, though they are not fully flat.
\crunchitemize
\item Flat network identifiers have been suggested by various works to support host mobility and multi-homing, including HIP~\cite{HIP},  Layered Naming Architecture~\cite{layered},
    and MobilityFirst \cite{MobilityFirst}.
\crunchitemize
\item AIP \cite{AIP}  applies flexible addressing to ensure trustworthy communication.
\crunchitemize
\item The core network of Long-Term Evolution (LTE) needs to forward downstream traffic according to the Tunnel End Point Identifier (TEID) of the flows \cite{ScaleBricks}.
\crunchitemize
\end{itemize}

The most critical problem caused by location";independent names is  \emph{Forwarding Information Base (FIB) explosion}. A FIB is a data structure, typically a table, that is used to determine the proper  forwarding actions for  packets, at the data plane of a forwarding device (e.g, switch or router). Forwarding actions include sending a packet to a particular outgoing interface and dropping the packet.
Determining proper forwarding actions of the names in a FIB is called name switching.
Unlike IP addresses, location-independent names are difficult to aggregate due to the lack of hierarchy and semantics. The increasing population of network hosts results in huge FIBs and their continuing fast growth.

\newcolumntype{C}[1]{>{\centering\let\newline\\\arraybackslash\hspace{0pt}}m{#1}}
\begin{table*}[ht]
 \centering \small
  \begin{tabular}{cC{2.0cm}C{2.3cm}C{1.3cm}C{7.8cm}}
     \toprule
FIB     & Construction Time & Query Structure Size (bits) &Query Time&Note \\
     \hline
     \sysn  & $O(n)$ & $\leq 4n \log w$ & $O(1)$  & Exact 2 memory reads per query. \\
     \hline
     \texttt{(2,4)-}Cuckoo \cite{CuckooSwitch} & $O(n)$ & $\sim 1.1 n(L\!+\!\log w) $ & $O(1)$  & Up to 8 memory reads per query. \\
     \hline
     SetSep \cite{ScaleBricks} & $O(n\log w)$ & $(2\!+\!1.5\log w)n $& $O(\log w)$  & No method for updates. Not designed as FIB in \cite{ScaleBricks}.\\
     \hline
     BUFFALO \cite{buffalo} & $O(n t)$ & $\alpha n$ & $O(t w)$  & \small{Probabilistic results. False positive ratio affected by $t$ and $\alpha$.} \\
     \hline
     TSS \cite{Srinivasan1999}& $O(n(t\!+\!\log w))$ & $O(n(t\!+\!\log w))$ & $O(t)$  & Designed\,for\,names\,with\,$t$\,fields. $t=O(L)$. \\
\bottomrule
\end{tabular}
  \small\caption{Comparison among FIBs. $n$: \# of names. $L$: length of names. $w$: \# of possible actions. \textbf{In practice, \boldsysn achieves 7\% to 40\% memory and $>$2x speed compared to Cuckoo, though they share the same order of big O time complexity}. }
\label{tbl:comp}
\vspace{-3.7ex}
\end{table*}

On the other hand, the increasing line speed requires the capability of fast forwarding. To support multiple 10Gb Ethernet links, a FIB may need to perform hundreds of millions of lookups per second.
Existing high-end switch fabrics use fast memory, such as TCAM or SRAM, to support intensive FIB query requests.
However, as discussed in many studies
\cite{buffalo,payless,DIFANE}, fast memory is expensive, power-hungry, and hence very limited on forwarding devices.
Therefore, achieving \emph{fast queries} with \emph{memory-efficient} FIBs is crucial for the new network architectures that rely on \emph{location";independent names}. If FIBs are small and increase very little with network size, network operators can use relatively inexpensive switches to build large networks and do not need frequent switch upgrade when the network grows. Hence, the cost of network construction and maintenance can be significantly reduced. For software switches, small FIBs are also important to fit into fast memory such as cache.

In this paper, we present a new FIB design called \sysn. It has the following properties.
\begin{enumerate}
\crunchitemizelarge
\item Compared to existing FIB designs for name switching, \sysn supports \textit{much faster name lookup} using \textit{significantly smaller memory}, shown by both theoretical analysis and empirical studies.
\crunchitemizelarge
\item \sysn can be efficiently updated to reflect network dynamics. A single CPU core is able to perform  millions of network updates per second.  \sysn makes the control plane highly scalable.
\crunchitemizelarge
\item \sysn guarantees to return the correct forwarding actions for valid names. It is \emph{not} probabilistic like those using Bloom filters \cite{buffalo,Caesar}.
\crunchitemizelarge
\end{enumerate}

\sysn is built on a new network algorithm named \dtsn Hashing. \dtsn was inspired by the techniques used in perfect hashing
\cite{MWHC1996,MonotoneHash}.
Different from the static solutions such as Bloomier Filter~\mbox{\cite{BloomierFilter}}, \emph{our unique contribution on \dtsn is to utilize the programmable networking techniques to support network dynamics and corresponding updates}.
\dtsn Hashing and \sysn FIB support fast query and update (addition/deletion of names).
In the resource-limited switches (data plane), \sysn only includes the query component and is optimized for memory efficiency and query speed.
The construction and update components are moved to the resource-rich control plane.
\sysn is constructed and updated in
the control plane and transmitted to the data plane via a standard API such as OpenFlow.
It is the first work to implement minimal perfect hashing schemes to network applications with update functionalities.
\sysn is designed for  name switching, so it does \emph{not} support IP prefix matching.

\sysn is \textbf{a portable solution}, and
it can be used in either software or hardware switches.
We have implemented \sysn in three different computing environments: memory mode, CLICK Modular Router \cite{click}, and Intel Data Plane Developement Kit \cite{DPDK}. The experiments conducted on an ordinary commodity desktop computer show that \sysn uses only few MBs of memory to support hundreds of millions lookups per second, when there are millions of names.

The rest of this paper is organized as follows.
Sec.~\ref{sec:related} presents related work.
We introduce the overview of \sysn in Sec.~\ref{sec:overview}.
We present the \dtsn data structure in Sec.~\ref{sec:othello} and the system design in Sec.~\ref{sec:FIB}.
We present the system implementation and experimental results in Sec.~\ref{sec:evaluation}.
Sec.~\ref{sec:dis} discusses a few related issues.
Finally, we  conclude this work in Sec.~\ref{sec:conclusion}.

\section{Related Work}
\label{sec:related}
\begin{TON}

\textbf{Location-independent network names.} Separating network location from identity has been proposed and kept repeating for over two decades. Numerous network architectures appear in the literature that suggest this concept.
As discussed in Sec. \ref{sec:intro}, a number of new network architectures adopt location-independent names.
A location-independent name can be a MAC address, a tuple consisting of several packet header fields \cite{headerspace1}, a file name \cite{CCN,NDN}, a TEID \cite{ScaleBricks},\,etc.\,To route packets for flat names, ROFL \cite{ROFL} and Disco \cite{Disco} propose to use compact routing to achieve scalability and low routing stretch. ROME \cite{ROME} is a routing protocol for layer-two networks that uses greedy routing whose routing table size is independent of network size.
\sysn is a forwarding structure and does not deal with routing.

\end{TON}

\textbf{FIB scalability.}
We name some techniques used for FIBs and compare them in Table \ref{tbl:comp}.

Hashing is a typical approach to reduce the memory cost of FIBs for name-based switching.
CuckooSwitch \cite{CuckooSwitch}   uses carefully revised Cuckoo hash tables \cite{CuckooHashing} to reach desirable performance on specific high-end hardware platforms.
ScaleBricks \cite{ScaleBricks} also makes use of a memory-efficient data structure \textit{SetSep} to partition a FIB to different nodes in a cluster, it does not store the names as well. We provide a comprehensive comparison of Cuckoo hashing, and \sysn in Sec. \ref{sec:comparison}.
The use of Bloom filters has been proposed in some designs such as BUFFALO \cite{buffalo, Caesar}. However, they may forward packets incorrectly due to the false positives in Bloom filters, causing forwarding loops and bandwidth waste. 
For IP lookups, SAIL \cite{SAIL} and Portire \cite{Asai2015} demonstrate desirable throughput for IPv4 FIB queries. These solutions are usually based on hierarchical tree structures, and their performance are challenged by FIBs with large number of flat names.
The Tuple Space Search algorithm\,(TSS) \cite{Srinivasan1999} is widely used for name matching with multiple files, such as in OpenVswitch and PIECES \cite{Shahbaz2016}.
It is not designed for flat-name switching. Other solutions use hardware to accelerate name switching. For example, Wang\,\emph{et\,al.} \cite{Wang2013} uses GPU to accelerate name lookup in Named Data Networks.

\textbf{Minimal perfect hashing.}
The data structure used in this work, \dtsn, is inspired by the studies on minimal perfect hashing.
    In particular, MWHC \cite{MWHC1996} is able to generate order-preserving minimal perfect hash functions using a random graph.
    MWHC is also presented as Bloomier Filter in \mbox{\cite{BloomierFilter}}.
    The differences between \dtsn and these studies include:
    (1) Both MWHC and Bloomier Filter are designed for static scenarios and they do not support frequent updates like \dtsn does.
    (2) \dtsn uses a bipartite graph instead of a general graph. This design allows much simpler concurrency control mechanism.
    (3) \dtsn is optimized for real network conditions. It performs different functionalities on the control plane and the data plane.
    \dtsn aims to support fast flat name switching, while MWHC is for finding minimum perfect hash functions~\mbox{\cite{MWHC1996}} and Bloomier Filter is designed for approximate evaluation queries~\mbox{\cite{BloomierFilter}}.

\section{Design Overview}
\label{sec:overview}

Consider a network of $n$ hosts identified by unique names. The hosts are connected by SDN-enabled switches.
A logically central controller is responsible of deciding the routing paths of packets. Each switch includes a FIB. The controller communicates with each switch to install and update the FIB.

Each packet header includes the name of the destination host, denoted as $k$. Upon receiving a packet, the switch decides the forwarding action of the packet, such as forward to a port or drop.
We assume the controller knows the set $S$ of all names in the network. In addition, \sysn only accepts queries of valid names, i.e., $k \in S$. We assume that firewalls or similar network functions are installed at ingress switches to filter packets whose destination names do not exist. More discussion about eliminating invalid names is presented in Sec.~\ref{sec:filter}.

\sysn makes use of a data structure \dtsn. 
\dtsn exists in both the switches (data plane) and the controller (control plane).
It has two different structures in the data plane and control plane:
\begin{itemize}
        \crunchitemize
\item \textbf{\dtsn query structure} implemented in a switch is the FIB. It only performs name queries. The memory efficiency and query speed is optimized and the update component is removed.
    \crunchitemize
\item \textbf{\dtsn control structure} implemented in the controller maintains the FIB as well as other information used for FIB construction and updates, such as the routing information base (RIB).
\end{itemize}
Upon network dynamics, the control structure computes the updated FIBs of the affected switches. The modification is then sent from the controller to each switch.

Separating the query and control structures is a perfect match to the programmable networks such as SDN. We call this new data structure design as a \textbf{Polymorphic Data Structure} (PDS). PDS is the key reason that we can apply minimal perfect hashing techniques in programable networks. 
PDS differs from the current SDN model. SDN separates the RIB and FIB to the control and data plane
respectively. We further move part of the FIB to the control plane to minimize the data plane resource cost.

\begin{figure}[t]
    \centering
    \includegraphics[width=0.88\linewidth]{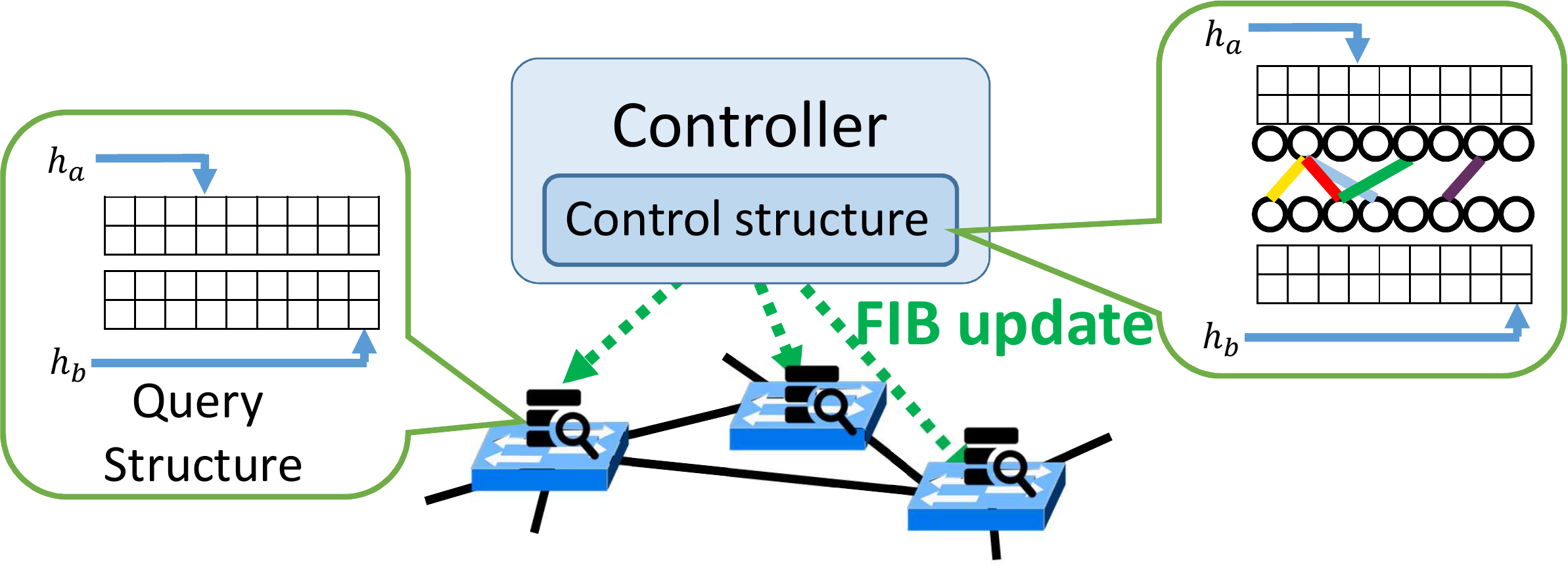}
    \caption{Network Overview of \sysn}
    \crunchAfterColFig
\end{figure}

\section{\dtsnintitleLarge Hashing}
\label{sec:othello}

In this section, we describe the \dtsn data structure.
Inspired by the MWHC minimal perfect hashing algorithm \cite{MWHC1996},
we design \dtsn specially for maintaining the FIB.  
The Bloomier filter \cite{BloomierFilter} can be considered as a special case of the static version of \dtsn.

The basic function of a FIB is to classify all names into multiple sets, each of which represents a forwarding action.
Let $S$ be the set of all names. $n = |S|$.
An \dtsn classifies $n$ names into  two disjoint sets $X$ and $Y$: $X \cup Y = S$ and $X \cap Y = \varnothing$.
 \dtsn can be extended to classify names into $d$ ($d>2$) disjoint sets, serving as a FIB with $d$ actions.

\subsection{Definitions}
\label{sec:othdef}
An \dtsn  is a seven-tuple  $\langle m_a,m_b, h_a, h_b, \arra, \arrb,G \rangle$, defined as follows.

\vspace{0.3ex}
\fbox{
  \begin{minipage}[h!]{0.95\linewidth}
   \indentinbox $\bullet$ Integers $m_a$ and $m_b$, describing  the size of \dtsn. 

  \indentinbox $\bullet$ A pair of uniform random hash functions $\langle h_a, h_b \rangle$, mapping names to integer values $\{0,1,\cdots,m_a\!-\!1\}$ and $\{0,1,\cdots,m_b\!-\!1\}$, respectively.

\indentinbox $\bullet$  Bitmaps $\arra$ and $\arrb$. The lengths are $m_a$ and $m_b$ respectively.

\indentinbox $\bullet$ A bipartite graph $G$. During \dtsn construction and update, $G$ is used to determine the values in $\arra$ and $\arrb$.
 \end{minipage}
}
\\

Figure \ref{fig:examplestructure} shows an  \dtsn  example.
We require that $m_a = \Theta(n)$, $m_b = \Theta(n)$, and $m_am_b > n^2$.
We provide two options to determine the values $m_a$ and $m_b$. 1) $m_a$ is the smallest power of $2$ such that $m_a \geq 1.33n$ and $m_b=m_a$. 2) $m_a$ is the smallest power of $2$ such that $m_a \geq 1.33n$ and $m_b$ is the smallest power of $2$ such that $m_b \geq n$.  A user may choose either option.
The difference is that for Option 1 we establish a rigorous proof of constant update time and for Option 2 we establish the proof with a constraint on $n$. However Option 2 provides slightly better empirical results.

\dtsn supports the query operation. For a name $k$, it computes $\tau(k) \in \{ 0, 1 \}$. If $k\in X$, $\tau(k)=0$. If $k\in Y$, $\tau(k)= 1$. If $k\notin S$, $\tau(k)$ returns 0 or 1 arbitrarily. The values of $\arra$ and $\arrb$ are determined during \dtsn construction, so that $\tau(k)$ can be computed by:
$$\tau(k) = \arra[h_a(k)] \oplus \arrb[h_b(k)]$$
Here, $\oplus$ is the \emph{exclusive or} (\texttt{XOR}) operation. In other words,
if $k \in X$, $\arra[h_a(k)]\!=\!\arrb[h_b(k)]$;
if $k \in Y$, $\arra[h_a(k)]\!\neq\!\arrb[h_b(k)]$.

\subsection{\dtsn Operations}
\dtsn is maintained via the following operations.

\fbox{
\centering
  \begin{minipage}[h!]{0.93\linewidth}
  \indentinbox $\bullet \ \mathtt{construct}(X,Y)$: Construct an \dtsn  for two name sets $X$ and $Y$.

   \indentinbox $\bullet \ \mathtt{addX}(k)$ and $\mathtt{addY}(k)$: add a new name $k$ into the set $X$ or $Y$.

   \indentinbox $\bullet \ \mathtt{alter}(k)$: For a name $k \in X \cup Y$, move $k$ from set $X$ to $Y$ or from $Y$ to $X$. After this operation, the query result $\tau(k)$ is changed.

  \indentinbox  $\bullet \ \mathtt{delete}(k)$: For a name $k \in X \cup Y$, remove $k$ from set $X$ or $Y$.

        \end{minipage}
}
\subsubsection{Construction}
\label{sec:construct}
 \begin{figure}[t]
 \center
 \includegraphics[width=0.85\linewidth]{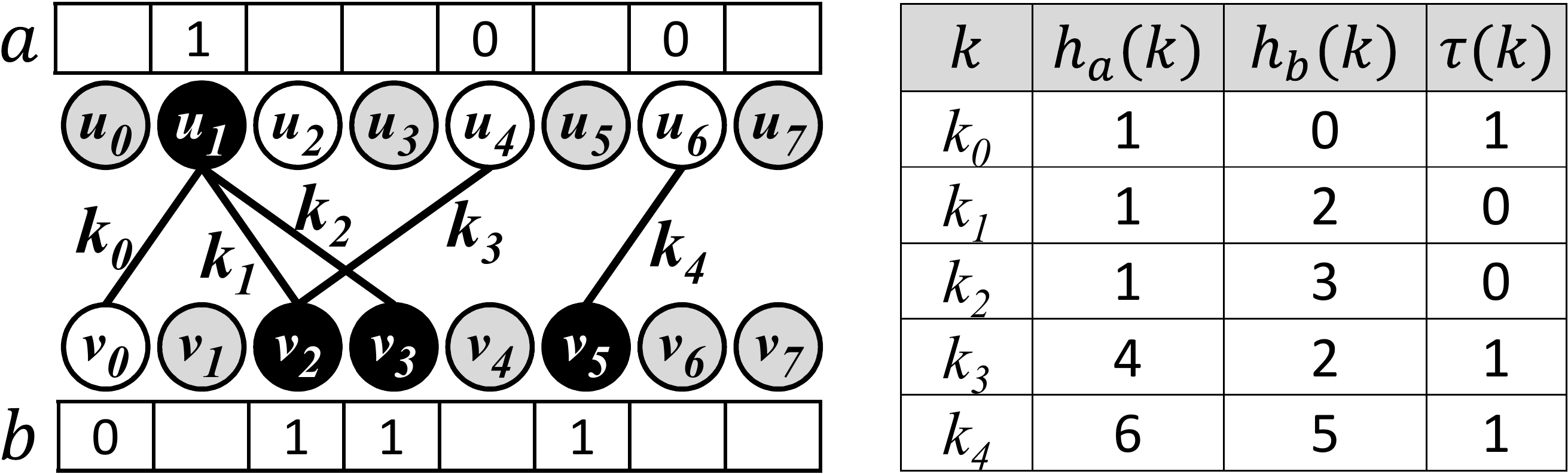}
 \caption{ Example of \dtsn of $n=5$ names with $m_a=m_b=8$. Left: Bipartite graph $G$ and bitmaps $\arra$ and $\arrb$.
 Right: five names $k_0,k_3,k_4 \in X$ and $k_1,k_2 \in Y$; the hash values and $\tau(k)$ values.}
 \label{fig:examplestructure}
 \crunchAfterColFig
 \end{figure}
The $\mathtt{construct}$ operation takes as input two sets of names $X$ and $Y$. The output is an \dtsn $\othO = \langle m_a,m_b,h_a,h_b,\arra,\arrb,G\rangle$.

Here, $G$ is used to determine the hash function pair and the values of $\arra$ and $\arrb$.
$G = (U,V,E)$.
$|U|=m_a$, $ |V| = m_b$.
A vertex $u_i \in U $
or $v_j \in V$
corresponds to bit $\arra[i]$ or $\arrb[j]$.
Each edge in $E$ represents a name.
 There is an edge $(u_i,v_j)\in E$ if and only if there is a name $k \in S$
 such that $h_a(k)=i$ and $h_b(k)=j$.

For each vertex that is associated with at least one edge, the corresponding bit is set to 0 or 1. A vertex associated with bit 0 is colored in white and a vertex associated with bit 1 is colored in black.
For vertices that have no associated edges, the value of the corresponding bits can be set to 0 or 1 arbitrarily,
because they do not affect any $\tau(k)$ value for $k \in S$.
In order to assign correct values of $\arra$ and $\arrb$, \dtsn requires $G$ to be \emph{acyclic}.

The construction algorithm consists of two phases.

\textbf{Phase I: Selecting the hash function pair.}

In this phase, \dtsn finds a hash function pair $\langle h_a,h_b \rangle$.
We assume there are many candidate hash functions and will discuss the implementation in Sec. \ref{sec:hashselection}.
In each round, two hash functions are chosen randomly and $G$ is accordingly generated.
We use Depth-First-Search (DFS) on $G$ to test whether it includes a cycle, which takes $O(n)$ time.
The order in which the edges are visited during the DFS, i.e, the DFS order of the edges is recorded to prepare for the second phase.
Note that if two or more names generate edges with the same two endpoints, we will consider as if there is a cycle.
If $G$ is cyclic, the algorithm will select another pair of hash functions until an acyclic $G$ is found.

\textbf{Phase II:  Computing the bitmaps.}

In this phase, we assign  values for the two bitmaps $\arra$ and $\arrb$.
First, the values in $\arra$ and $\arrb$ are marked as undefined.
Then, we execute the followings for each $e = (u_i, v_j)$ in the DFS order of the edges:
Let $k$ be the name that generates $e$.
If none of $\arra[i]$ and $\arrb[j]$ has been assigned, let $\arra[i] \gets 0$ and  $\arrb[j] \gets \tau(k)$.
If there is only one of $\arra[i]$ and $\arrb[j]$ has been assigned, we can always assign an appropriate value to the other one, such that $\arra[i] \oplus \arrb[j] = \tau(k)$.
As $G$ is acyclic, following the DFS order, we will never see an edge such that both $\arra[i]$ and $\arrb[j]$ have values.

\begin{TON}
We show the pseudocode of \dtsn construction in Algorithm \ref{algo:gen}. 
\begin{algorithm}[ht]
\setstretch{0.93}
\DontPrintSemicolon
\SetKwInput{Input}{Input}
\SetKwInput{Output}{Output}
\Input{ Key-set $X$,$Y$.} 

\Output{An \dtsn structure $\langle m, h_a, h_b, a, b ,G \rangle$ }
\SetKwInput{Proc}{Procedure}
\Begin{
\nl $S \gets X \cup Y$. \;
\nl select $m$ value according to $n = |S|$. \;
\tcc*[f]{Phase I: decide hash function pair}\;
\nl \Repeat{
 $\mathtt{GeneratedGraphIsAcyclic}(S,h_a,h_b)$.
}{
\nl Randomly select hash function $h_a$, $h_b$.
}
\tcc*[f]{Phase II: Compute bitmaps}\;
\nl Compute $G = (U,V,E)$  using $h_a$, $h_b$ and $S$. \;
\nl Execute Depth-First-Search on $G$. \;
\nl    $(e_1,e_2,\cdots,e_n) \gets$ the DFS order of $E$. \;
\nl    Mark all $\arra[i],\arrb[j] ( 0 \leq i,j < m) $ as  $\mathtt{unassigned}$.\;
\nl \For { $t = 1 , 2, \cdots, n$} {
\nl    $k \gets $ the corresponding address for $e_t$.  \;
\nl \lIf {$k \in X$} {$v \gets 0$ \textbf{else} $v \gets 1$.}
\nl    $i \gets h_a(k)$; $j \gets h_b(k)$. \;
\nl    \uIf{\textnormal{both} $\arra[i]$ \textnormal{and} $\arrb[j]$ \textnormal{are} $\mathtt{unassigned}$}{
\nl        $\arra[i] \gets 0$; $\arrb[j] \gets v$.\;
    }
\nl    \uElseIf{$\arra[i]$ \textnormal{\ is\ } $\mathtt{unassigned}$}{
\nl        $\arra[i] \gets \arrb[j] \oplus v$.\;
    }
\nl    \Else (\tcc*[f]{$\arrb[j]$ is$ \mathtt{unassigned}$    })
    {\nl$\arrb[j] \gets \arra[i] \oplus v$.\; }}}
\caption{\dtsn $\mathtt{construct}$ procedure
\label{algo:gen}}
\end{algorithm}

Note that the edges of $G$ are only determined by $S=X\cup Y$
 and the hash function pair $\langle h_a , h_b \rangle$. If we find $G$ to be  cyclic for a given $S$ and a pair $\langle h_a,h_b \rangle$, we shall use another pair  $\langle h_a , h_b \rangle$ to make $G$ acyclic. We show that for a randomly selected pair of hash functions $\langle h_a,h_b\rangle$, the probability of $G$ to be acyclic is very high:
 \begin{theorem}
\label{thm:acyclic}
Given set of names $S = X \cup Y$, 
$n=|S|$.
Suppose $h_a, h_b$ are randomly selected from a family of fully random hash functions. 
$h_a: S\to\{0,1,\cdots,m_a-1\}$,
$h_b: S\to\{0,1,\cdots,m_b-1\}$.
Then the generated bipartite graph $G$
is acyclic with probability $\sqrt{1-c^2}$ when $n$$\to$$\infty$, where $c=\frac{n}{\sqrt{m_am_b}}$, $c<1$.
\end{theorem}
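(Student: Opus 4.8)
The plan is to use the classical first-moment and method-of-moments argument for counting short cycles in a sparse random multigraph, specialized to the bipartite model induced by $\langle h_a,h_b\rangle$. First I would note that, because $h_a$ and $h_b$ are fully random, $G$ is exactly a random bipartite multigraph on vertex classes $U$ ($|U|=m_a$) and $V$ ($|V|=m_b$) obtained by throwing $n$ edges, each choosing its two endpoints independently and uniformly at random. Every cycle of a bipartite graph has even length $2\ell$; the case $\ell=1$ means two names sharing both endpoints, which by the convention in the text counts as a cycle. Hence ``$G$ is acyclic'' is precisely the event $\{Z_\ell=0\text{ for all }\ell\ge1\}$, where $Z_\ell$ is the number of unordered cycles of length $2\ell$ in $G$. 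I would treat $c=n/\sqrt{m_am_b}$ as fixed (more generally, bounded away from $1$) as $n\to\infty$.

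Next I would compute $\E[Z_\ell]$. A length-$2\ell$ cycle is obtained by choosing an ordered alternating list of $\ell$ distinct vertices of $U$ and $\ell$ distinct vertices of $V$, assigning $2\ell$ distinct names to its $2\ell$ edges, and dividing by the $2\ell$ symmetries of a bipartite cycle; each assigned name realizes its prescribed edge with probability $1/(m_am_b)$, independently. This gives
$$\E[Z_\ell]=\frac{1}{2\ell}\cdot\frac{m_a!}{(m_a-\ell)!}\cdot\frac{m_b!}{(m_b-\ell)!}\cdot\frac{n!}{(n-2\ell)!}\cdot\frac{1}{(m_am_b)^{2\ell}}\ \le\ \frac{1}{2\ell}\Big(\frac{n^2}{m_am_b}\Big)^{\ell}=\frac{c^{2\ell}}{2\ell},$$
with equality in the limit $n\to\infty$ for each fixed $\ell$. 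Put $\lambda_\ell:=c^{2\ell}/(2\ell)$. The same bookkeeping applied to ordered $r$-tuples of pairwise vertex- and name-disjoint cycles shows that the joint factorial moments of $(Z_1,\dots,Z_L)$ converge to those of independent Poisson variables with means $\lambda_1,\dots,\lambda_L$, so by the method of moments $(Z_1,\dots,Z_L)$ converges jointly to such a Poisson vector, for every fixed $L$.

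It remains to control long cycles, and this is exactly where $c<1$ is used. Since $\E[Z_\ell]\le\lambda_\ell$ for all $n$ and all $\ell$, Markov's inequality gives $\Pr[\exists\text{ cycle of length}>2L]\le\sum_{\ell>L}\lambda_\ell$, a bound independent of $n$; and $\sum_{\ell\ge1}\lambda_\ell=\tfrac12\sum_{\ell\ge1}(c^2)^\ell/\ell=-\tfrac12\ln(1-c^2)<\infty$ precisely because $c<1$, so $\sum_{\ell>L}\lambda_\ell\to0$ as $L\to\infty$. Writing $\{G\text{ acyclic}\}=\{Z_1=\dots=Z_L=0\}\setminus\{Z_1=\dots=Z_L=0,\ \exists\text{ cycle of length}>2L\}$, the first event has probability tending (as $n\to\infty$) to $\prod_{\ell=1}^{L}e^{-\lambda_\ell}$ and the subtracted event has probability at most $\sum_{\ell>L}\lambda_\ell$. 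Letting $n\to\infty$ and then $L\to\infty$ yields
$$\Pr[G\text{ acyclic}]\ \longrightarrow\ \prod_{\ell\ge1}e^{-\lambda_\ell}=\exp\!\Big(\tfrac12\ln(1-c^2)\Big)=\sqrt{1-c^2}.$$

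I expect the main obstacle to be making the middle and last steps fully rigorous: establishing the joint Poisson convergence via factorial moments requires careful handling of the disjointness constraints among overlapping would-be cycles, and the interchange of the limits $n\to\infty$ and $L\to\infty$ must be justified by the uniform-in-$n$ tail estimate above — the one place the hypothesis $c<1$ is essential. The expected-value asymptotics themselves are routine.
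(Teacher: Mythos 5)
Your proposal follows essentially the same route as the paper: both compute the limiting Poisson intensity $\lambda_{2\ell}\to c^{2\ell}/(2\ell)$ for the number of cycles of length $2\ell$, sum over $\ell$ to get $-\tfrac12\ln(1-c^2)$, and exponentiate to obtain $\sqrt{1-c^2}$. The only difference is that you work directly in the $n$-edge multigraph model and spell out the method-of-moments Poisson convergence and the uniform-in-$n$ tail control over long cycles (where $c<1$ enters), details the paper delegates to its citation of Botelho et al.
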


When $G$ is acyclic, we say that $\langle h_a, h_b \rangle $ is a \textit{valid hash function pair} for $S$. We prove Theorem \ref{thm:acyclic} using the technique described in \cite{Botelho2012}.

\begin{proof}
Let $G=(U,V,E)$ be a bipartite random graph with $|U| = m_a$, $|V| = m_b$, $|E| = n$, where each edge is independently taken at random with probability $\frac{n}{m_am_b}$. 
Let $\mathcal{C}_{2\ell}$ be the set of cycles of length $2\ell$ ($\ell\geq 1$) in the complete bipartite graph $K_{m_a},{m_b}$. A cycle in $\mathcal{C}_{2\ell}$ is a sequence of $2\ell$ distinct vertices chosen from $U$ and $V$. Hence, 
$$|\mathcal{C}_{2\ell}| = \frac{1}{2\ell} (m_a)_\ell (m_b)_\ell,$$ where $(m)_\ell = m(m-1)\cdots(m-\ell+1)$. 
Meanwhile, As each edge in $G$ is selected independently, each cycle in $\mathcal{C}_{2\ell}$ occurs in $G$ with probability $(\frac{n}{m_am_b})^{2\ell}$.

As proved in \cite{Botelho2012}, the number of cycles of length $2\ell$ in $G$ converges to a Poisson distribution with parameter $\lambda_{2\ell}$. For $n\to \infty$,
\begin{align*}
 \lambda_{2\ell} & = p^{2\ell} |\mathcal{C}_{2\ell}| \\
& = (\frac{n}{m_am_b})^{2\ell} \frac{1}{2\ell} (m_a)_{\ell} (m_b)_\ell \to \frac{1}{2\ell} \frac{n^{2\ell}}{(m_am_b)^\ell}
\end{align*}
Let $ c = \frac{n}{\sqrt{m_am_b}}$ we have $\lambda_{2\ell} \to \frac{1}{2\ell}c^{2\ell}$ as $n \to \infty$.

The number of cycles of any even length in $G$, represented as a random variable $\mathcal{X}$, converges to a Poisson distribution with parameter $\lambda_e$, where 
$$ \lambda_e = \sum_{\ell = 1}^{\infty} \lambda_{2\ell} = - \frac{1}{2} \ln(1-c^2).$$

Therefore, the probability that $G$ contains no cycle is 
$$\text{Pr}(\mathcal{X} =0) = e^{-\lambda_e} = \sqrt{1-c^2}.$$

\end{proof}


When $c \leq 0.75$ (i.e, $n \leq 0.75m$), $\sqrt{1-c^2}\geq 0.66$. 
Hence
the expected number of rounds to find an acyclic $G$ in Phase I is
$\frac{1}{\sqrt{1-c^2}} \leq 1.51 $ when $c < 0.75$. The time complexity is $O(n)$ in each round.
The second phase takes
$O(n)$ time to visit $n$ edges and assign values of $\arra$ and $\arrb$.
Hence, the total expected time of $\mathtt{construct}$ is $O(n)$.

\end{TON}


\subsubsection{Name addition} \label{sec:nameadd}
To $\mathtt{add}$ a name $k$ to $X$ or $Y$,
the graph $G$ and two bitmaps should be changed in order to maintain the correct result $\tau(k)$.

The algorithm first computes the edge $ e = (u,v)$ to be added to $G$ for $k$, $u = u_{h_a(x)}$, $v = v_{h_b(x)}$. Note that
    $G$ can be decomposed into connected components.
As shown in Figure \ref{fig:exampleadd}, $e$ must fall in one of the following cases.

\begin{figure}[t]
    \centering
    \includegraphics[width=0.45\linewidth]{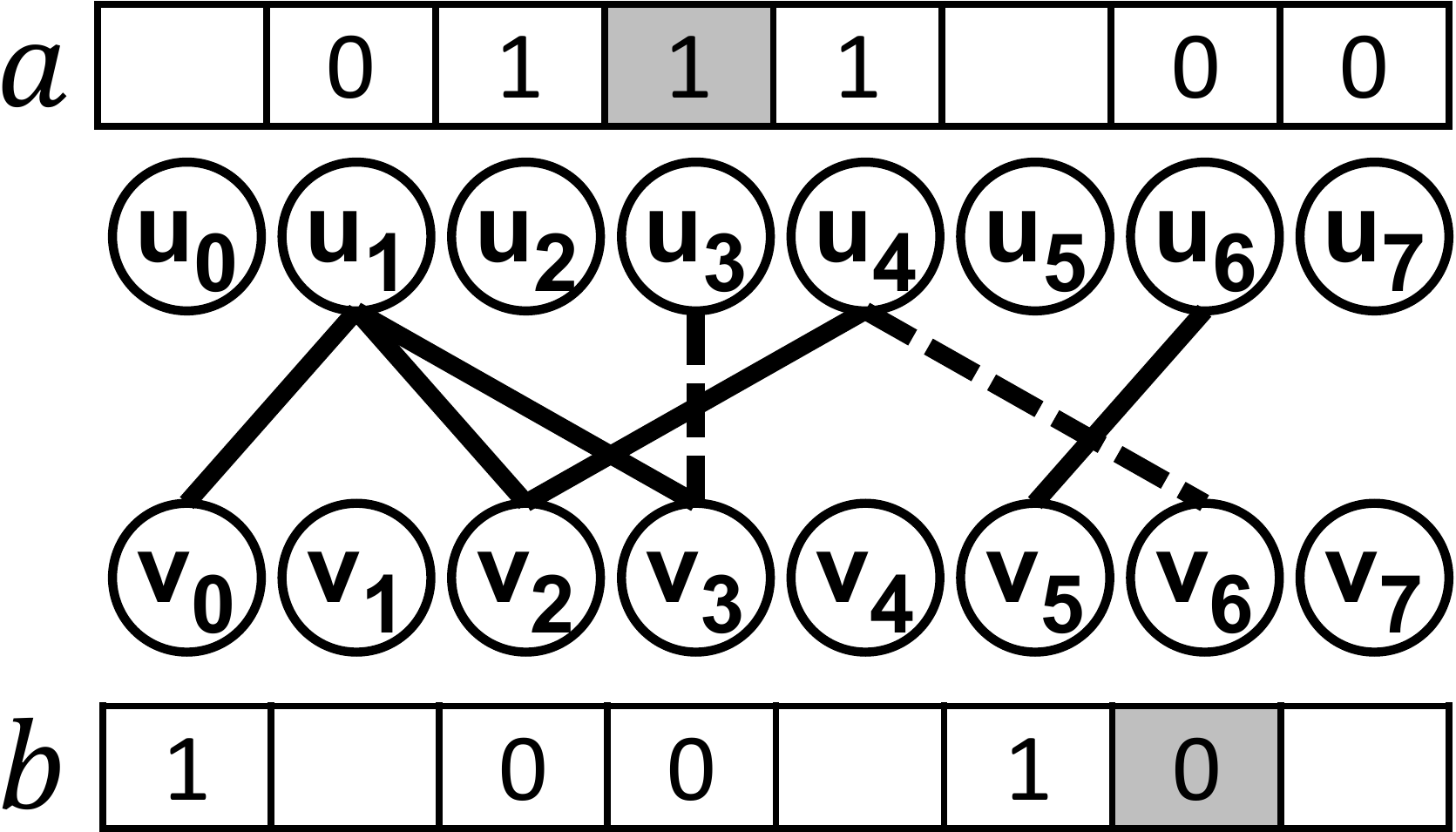}~~
        \includegraphics[width=0.45\linewidth]{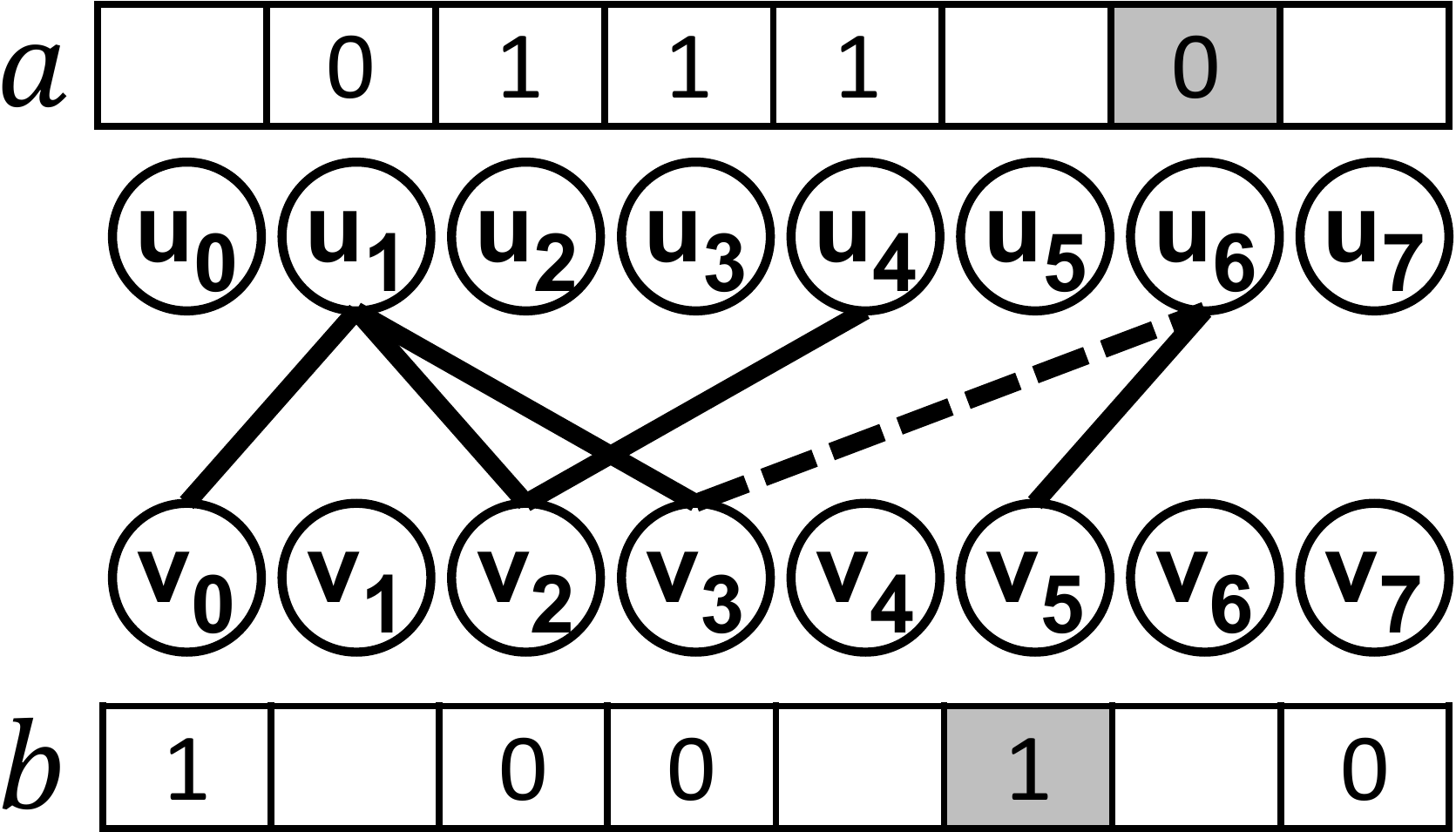}
    \caption{Example of \dtsn update. Dashed edges representing added keys. Gray cells: modified values in $a$ and $b$. Left: Case I, Right: Case II. }
    \label{fig:exampleadd}
    \crunchAfterColFig
\end{figure}

 \emph{Case I:~}$u$ and $v$ belong to the same connected component $\cc$. Adding $e$ to $G$ will introduce
a cycle. In this case, we have to re-select a hash function pair $\langle h_a, h_b \rangle$ until a valid hash function pair is found for the new name set $S \cup \{k\}$. The $\mathtt{construct}$ algorithm is used to perform this process.

\emph{Case II:~}$u$ and $v$ are in two different connected components.
Combining the two connected components and the new edge, we have a single connected component that is still acyclic.
As discussed in Sec. \ref{sec:construct}, it is simple to find a valid coloring plan for an acyclic connected component. Hence, the values of $\arra$ and $\arrb$ can also be set properly.
In fact, at least one of the two connected components can keep the existing value assignments.

\begin{TON}
\textbf{Complexity Analysis.}
We now compute the time complexity of $\mathtt{add}$ using three theorems. 
In particular, we will show that the time complexity of the $\mathtt{add}$ operation is $O(1)$. 
The proof is established by computing the \textit{susceptibility} of graph $G$, namely $\chi(G)$. 
We give a closed-form estimation for $\chi(G) = \frac{1}{1-p}$ where $p= \frac{n(m_a+m_b)}{2m_am_b}$, 
and prove that $\chi(G)$ has a constant upperbound $E[\chi(G)]\leq 4$. 
We are able to compute the closed-form formulae for $\chi(G)$ when $m_a = m_b$. For the case $m_a = 2m_b$, we give a looser upper bound. The numerical estimation shows that the upper bound $E[\chi(G)]\leq 4$ is true for both of the two situations where $m_a = m_b$ and $m_a = 2m_b$.

For the sake of analysis we let ${\mathcal G}_A(m_a,m_b,n)$ be a random acyclic graph generated using the same process as ${\mathcal G}(m_a,m_b,n)$ except that an edge is not added if it introduces a cycle in the graph. It could also be generated by repeatedly generating graphs ${\mathcal G}(m_a,m_b,n)$ until we get an acyclic graph.
It is evident that this random graph model corresponds to the graphs constructed and maintained by Othello.

As stated before, there are two options in choosing values $m_a$ and $m_b$. In Option 1, $m_a=m_b$ and in Option 2, $m_a=m_b$ or $m_a=2m_b$.
For the case $m_a=m_b$ we have
 \begin{theorem}
     \label{thm:sus2}
Suppose \ we \ have \ a \ random \ graph\  ${\mathcal G}_A(m_a,m_b,n)$
where $m_a=m_b$ and we randomly select a node $w$ in ${\mathcal G}_A$. Let $\cc(w)$ be the connected component containing $w$. Then the expected value of $|\cc(w)|$ is $\frac{m_a}{m_a-n}$ as $n \to \infty$.
 \end{theorem}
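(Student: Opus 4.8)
The plan is to compute the expected size of the connected component containing a random node $w$ in the random acyclic bipartite graph $\mathcal{G}_A(m_a,m_b,n)$ with $m_a=m_b=:m$ by a generating-function / branching-process argument. First I would set up the heuristic picture: explore $\cc(w)$ by breadth-first search. Each time we reach a node $u\in U$, it may be incident to further edges, each of which leads to a fresh node in $V$, and vice versa. Since $|E|=n$ and each edge independently hits a given vertex of $U$ with probability about $n/(m_a m_b)$ summed over the $m_b$ choices (and symmetrically for $V$), the expected number of \emph{new} neighbors discovered from any node is $p:=n/m$ in the limit $n,m\to\infty$ with $n/m$ fixed. Because $\mathcal{G}_A$ is acyclic, the exploration is genuinely a tree, so $\cc(w)$ is a Galton--Watson tree with offspring mean $p<1$; the expected total progeny of such a tree is $1/(1-p) = m/(m-n)$. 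The key point to make rigorous is that conditioning on acyclicity does not change the local (tree) structure in the limit — this follows from Theorem~\ref{thm:acyclic}, which shows $\mathcal{G}(m_a,m_b,n)$ is already acyclic with probability bounded away from $0$, so $\mathcal{G}_A$ and $\mathcal{G}$ agree on any fixed finite neighborhood with probability $\to 1$.

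To make this precise rather than heuristic, I would instead argue directly on the graph $\mathcal{G}(m_a,m_b,n)$ (Poisson/Bernoulli edge model, each of the $m_a m_b$ possible edges present independently with probability $q=n/(m_a m_b)$, so the expected number of edges is $n$), compute $\E[|\cc(w)|]$ there, and then transfer to $\mathcal{G}_A$ using the fact that the two models coincide with probability $\sqrt{1-c^2}>0$ and that, conditioned on acyclicity, component sizes cannot increase. In the independent-edge bipartite model, write $\E[|\cc(w)|] = 1 + \sum_{k\ge 1}\Pr[\text{a fixed vertex } z \text{ at "distance" structure} \dots]$; cleaner is the recursive identity for the expected component size seen from a $U$-vertex versus a $V$-vertex. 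Let $S_U$ (resp. $S_V$) be the expected size of the component discovered starting the exploration from a $U$-node (resp. $V$-node), counting that node. A $U$-node has $\mathrm{Bin}(m_b,q)$ neighbors in $V$, each opening an independent subtree of expected size $S_V-1$ beyond it (the $-1$ avoiding double counting is handled by the tree/Poisson structure in the limit); this gives, as $m_a=m_b=m$, $q m = n/m = p$, the linear system $S_U = 1 + p\,S_V$ and $S_V = 1 + p\,S_U$, whence $S_U=S_V=1/(1-p)=m/(m-n)$. A uniformly random node is in $U$ or in $V$, but by symmetry ($m_a=m_b$) both give the same value, so $\E[|\cc(w)|]\to m/(m-n)$.

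The main obstacle is the passage from the independent-edge model to the actual Othello graph (fixed edge count $n$ arising from $n$ names hashed by $\langle h_a,h_b\rangle$) and then to the acyclic-conditioned model $\mathcal{G}_A$, while controlling the error terms uniformly enough that the limit $\E[|\cc(w)|]\to m/(m-n)$ genuinely holds — one must rule out the rare large components (which inflate the expectation) and justify interchanging limit and expectation. I would handle this by (i) standard asymptotic equivalence of the $G(N,M)$-type and $G(N,q)$-type models for the quantities of interest, (ii) a tail bound showing $\Pr[|\cc(w)|>t]$ decays geometrically in $t$ uniformly in $n$ (subcritical branching comparison), so that the contribution of atypically large components is negligible and the conditioning on global acyclicity perturbs $\E[|\cc(w)|]$ by $o(1)$, and (iii) invoking the convergence of the number of cycles to a Poisson law (already established in the proof of Theorem~\ref{thm:acyclic}) to quantify how little mass the conditioning removes. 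The remaining computations (the Binomial-to-Poisson limit, solving the $2\times 2$ linear system) are routine.
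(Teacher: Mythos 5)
Your proposal arrives at the correct value $\frac{m_a}{m_a-n}$ and is sound in outline, but it takes a genuinely different route from the paper. The paper does not derive the susceptibility of the unconditioned random graph at all: it cites Lemma~1 of \cite{devroye2003cuckoo} for $\chi({\mathcal G}(m_a,m_a,n))=\frac{2m_a}{2m_a-2n}$ and spends its entire effort on the passage to ${\mathcal G}_A$, which it handles by an explicit coupling rather than by conditioning. Edges are generated one by one, any edge that would close a cycle is diverted into a side set $S$ (of size $O(\log^2 n)$ w.h.p., since all components have size $O(\log n)$), and the graph $G_2$ obtained by adding the rejected edges back is a plain ${\mathcal G}(m_a,m_a,n')$ with $n'=n+O(\log^2 n)$; monotonicity of $\chi$ under edge addition then gives $\chi(G_1)\le\chi(G_2)\to\frac{2m_a}{2m_a-2n}$. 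Your two-type branching-process computation ($S_U=1+pS_V$, $S_V=1+pS_U$, hence $S_U=S_V=\frac{1}{1-p}$) is a legitimate from-scratch replacement for that citation and yields the same limit. Where your plan is harder than the paper's is the de-conditioning step: ``conditioning on acyclicity cannot increase component sizes'' is not a pointwise statement, and the crude bound $\mathbf{E}[X\mid A]\le \mathbf{E}[X]/\mathbf{Pr}[A]$ costs a factor $1/\sqrt{1-c^2}$ rather than giving the exact limit, so you really would need the local/global decoupling and uniform geometric tail bounds you sketch in steps (ii)--(iii). The paper's rejection coupling sidesteps all of that, at the price of only establishing an upper bound on $\chi({\mathcal G}_A)$ rather than the asserted equality --- which is all that the downstream insertion analysis (Theorem~\ref{thm:insertion}) actually uses. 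Note also that the paper's two descriptions of ${\mathcal G}_A$ (edge rejection versus resampling until acyclic) are not literally the same distribution; your conditioning argument targets the latter, the paper's coupling the former, so if you pursue your route you should say explicitly which model you are analyzing.
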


\begin{proof}
The important parameter that governs the complexity of an insertion
is the \emph{susceptibility} of the graph $G$ which is defined as the expected
size of the connected component that contains a randomly chosen node,
and is denoted by $\chi(G)$.

Let $\chi(G) = E[|\mathtt{cc}(w)|]$ where $w$ is randomly selected from $G$ and $|\cc(w)|$ denotes the number of nodes in $\cc(w)$.
In ~\cite[Lemma 1]{devroye2003cuckoo}, it was proved that for a random sparse graph ${\mathcal G}(m_a,m_a,n)$ with $n$ edges, we have $\chi(G) = \frac{2m_a}{2m_a-2n}$ when $n \to \infty$ given that $n<0.999m_a$.
We will show that the same bound holds for a graph ${\mathcal G}_A(m_a,m_a,n)$. It is well known that the largest connected component in a random graph with $n$ edges and $m$ nodes with $n\leq 0.99\cdot m/2$ has size $O(\log n)$ with probability $1-\frac{1}{n^{10}}$~\cite{devroye2003cuckoo}.

We now generate a graph ${\mathcal G}_A(m_a,m_a,n)$ by generating the edges one by one. If an edge
$(v,w)$ makes the graph cyclic, then we do not add it, but instead put it into a set $S$. Let $E$ be the set of $n$ edges
in the generated acyclic graph $G_1$. Then graph $G_2$ with the set of edges $E\cup S$ will  clearly be a graph ${\mathcal G}(m_a,m_a,n')$ with $n'=n+O(\log^2 n)\leq 0.999 m/2$.
Now we have that $\chi(G_1) \le \chi(G_2)$ and $\chi(G_2)=\frac{2m_a}{2m_a-2n'} \to \frac{2m_a}{2m_a-2n}$ when $n \to \infty$.
\end{proof}

For the case $m_a=2m_b$ we have the following result:
 \begin{theorem}
     \label{thm:sus3}
Suppose \ we \ have \ a \ random \ graph\  ${\mathcal G}_A(m_a,m_b,n)$
where $m_a=2m_b$, $n\leq 0.65m_b$, and that we randomly select a node $w$ in ${\mathcal G}$. Let $\cc(w)$ be the connected component containing $w$. Then the expected value of $|\cc(w)|$ is $O(1)$.
 \end{theorem}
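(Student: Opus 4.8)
The plan is to follow the same two–step strategy used for Theorem~\ref{thm:sus2}: first bound the susceptibility $\chi$ of the \emph{cyclic} random bipartite graph $\mathcal{G}(m_a,m_b,n)$, and then transfer the bound to the acyclic model $\mathcal{G}_A(m_a,m_b,n)$ by the ``rejected–edge'' argument already used there. The one genuinely new ingredient is that the two sides now have different sizes, so the symmetric formula $\tfrac{m_a}{m_a-n}$ of \cite{devroye2003cuckoo} does not apply verbatim; instead I would analyze the component of a random vertex as a \emph{two–type} branching process, with the two types being ``$U$-vertex'' and ``$V$-vertex''.

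First I would bound $\chi(\mathcal{G}(m_a,m_b,n))$. Exploring (by BFS) the component of a uniformly random vertex $w$ is stochastically dominated by a two-type Galton--Watson tree in which a type-$U$ node has $\mathrm{Poisson}(\mu_a)$ type-$V$ children and a type-$V$ node has $\mathrm{Poisson}(\mu_b)$ type-$U$ children, where $\mu_a=n/m_a$ and $\mu_b=n/m_b$ are the expected degrees on the two sides; this is exactly the generalization to unequal part sizes of the argument behind \cite[Lemma~1]{devroye2003cuckoo}. The mean-offspring matrix $\left(\begin{smallmatrix}0&\mu_a\\\mu_b&0\end{smallmatrix}\right)$ has spectral radius $c=n/\sqrt{m_am_b}$, so the tree is subcritical whenever $c<1$, and solving the linear system $\alpha=1+\mu_a\beta$, $\beta=1+\mu_b\alpha$ for the expected total progeny of the two root types gives, as $n\to\infty$,
\begin{equation*}
\chi(\mathcal{G}(m_a,m_b,n))\;\longrightarrow\;\frac{m_a\alpha+m_b\beta}{m_a+m_b}=\frac{m_a+m_b+2n}{(m_a+m_b)\bigl(1-n^2/(m_am_b)\bigr)},
\end{equation*}
which reduces to $\tfrac{m_a}{m_a-n}$ when $m_a=m_b$, recovering Theorem~\ref{thm:sus2}. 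Under the hypotheses $m_a=2m_b$ and $n\le 0.65m_b$ one has $c^2=n^2/(2m_b^2)\le 0.65^2/2<0.22$ and $\tfrac{m_a+m_b+2n}{m_a+m_b}\le\tfrac{3m_b+1.3m_b}{3m_b}<1.44$, so the limiting value is below $1.44/0.78<1.85$; in particular $\chi(\mathcal{G}(m_a,m_b,n))=O(1)$ (and, consistent with the paper's recurring claim, well under $4$).

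Next I would transfer this to $\mathcal{G}_A(m_a,m_b,n)$ exactly as in the proof of Theorem~\ref{thm:sus2}: generate the edges one at a time, and whenever an edge would close a cycle put it into a set $S$ rather than adding it; let $G_1$ be the resulting acyclic graph (with $n$ edges) and $G_2=G_1\cup S$. By Theorem~\ref{thm:acyclic} the number of cycles of $G_2$ converges in distribution to a Poisson variable with bounded parameter, and since the largest connected component of a subcritical sparse bipartite graph has size $O(\log n)$ with probability $1-n^{-10}$ (the analogue of the fact used in Theorem~\ref{thm:sus2}), this yields $|S|=O(\log^2 n)$ with high probability, so $G_2$ is distributed as $\mathcal{G}(m_a,m_b,n')$ with $n'=n+O(\log^2 n)$. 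Adding edges only merges components, so $\chi(G_1)\le\chi(G_2)$; and because $n'/m_b\to n/m_b\le 0.65$, the previous paragraph gives $\chi(G_2)=O(1)$, hence $\chi(\mathcal{G}_A(m_a,m_b,n))=O(1)$. (On the exceptional event of probability $n^{-10}$ the component size is still at most $m_a+m_b=O(n)$, contributing $o(1)$ to the expectation.)

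The step I expect to be the main obstacle is the first one: \cite{devroye2003cuckoo} only states the susceptibility formula for the balanced bipartite (standard cuckoo-hashing) case, so the branching-process domination, the convergence of the expected component size to the closed form above, and the ``largest component is $O(\log n)$ w.h.p.'' tail bound all need to be re-established for unequal part sizes $m_a=2m_b$. None of this is conceptually new, but it is precisely what prevents one from simply quoting Theorem~\ref{thm:sus2}. For the theorem as stated one could actually stop much earlier: it suffices to observe that $c=n/\sqrt{m_am_b}\le 0.65/\sqrt2$ is bounded away from $1$, which makes the two-type branching process subcritical and hence its expected total progeny a bounded constant — giving $\chi=O(1)$ without computing the exact limit.
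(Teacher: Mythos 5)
Your proposal is correct in outline but takes a genuinely different route from the paper. The paper never analyzes the unbalanced bipartite graph directly: it generates ${\mathcal G}_A(m_a,m_b,n)$ by drawing edges from a \emph{general} (non-bipartite) random graph on $m_a+m_b$ vertices and rejecting same-side edges (acceptance probability $\tfrac{4}{9}$ when $m_a=2m_b$), so that adding the rejected edges back yields ${\mathcal G}_A(m_a+m_b,n')$ with $n'=\tfrac{9}{4}n\pm O(\sqrt n)$; it then quotes Theorem 3.3(i) of \cite{janson2008susceptibility} for general random graphs to get $\chi\le \frac{3m_b}{3m_b-\frac{9}{2}n}=O(1)$, and finally transfers from the cyclic to the acyclic model by the same argument as in Theorem~\ref{thm:sus2}. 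You instead attack the bipartite model head-on with a two-type branching process; your cyclic-to-acyclic transfer ($G_1$ versus $G_2=G_1\cup S$ with $\chi(G_1)\le\chi(G_2)$) coincides with the paper's. What your route buys: a closed-form limit $\frac{m_a+m_b+2n}{(m_a+m_b)(1-c^2)}$ that correctly specializes to $\frac{m_a}{m_a-n}$ when $m_a=m_b$, a much sharper constant (about $1.85$ versus roughly $40$ from the paper's chain of inequalities), and---as you observe---the hypothesis $n\le 0.65m_b$ becomes essentially unnecessary, since subcriticality only requires $c=n/\sqrt{m_am_b}$ bounded away from $1$; the paper's $0.65$ is an artifact of needing $\tfrac{9}{2}n<3m_b$ after the $\tfrac{9}{4}$ edge inflation. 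What the paper's route buys: it never has to re-derive branching-process machinery for unequal part sizes, which is exactly the gap you correctly flag as the remaining work in your argument --- the domination by the two-type Galton--Watson tree, the convergence of the expected component size to the stated limit, and the $O(\log n)$ largest-component tail bound all have to be established for $m_a=2m_b$, since \cite{devroye2003cuckoo} covers only the balanced case. These are standard exercises in inhomogeneous random graph theory, so your argument is a sound proof sketch rather than a flawed one, but as written it carries an unfinished lemma that the paper's reduction sidesteps entirely.
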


\begin{proof}
Again let $\chi(G) = E[|\mathtt{cc}(w)|]$ where $w$ is randomly selected from $G$.
We generate a graph ${\mathcal G}_A(m_a,m_b,n)$ with $n\leq 0.65m_b$
as follows.  Let $V_a$ with $\abs{V_a} = m_a$ be the set of nodes on the left side, and $V_b$ with $\abs{V_b} = m_b$ be the set of nodes on the right side.  We generate edges one by one from random graph ${\mathcal G}_A(m_a + m_b, n)$, and reject an edge $(v,w)$ if either $(v\in V_a \wedge w\in V_a)$ or $(v \in V_b \wedge w \in V_b)$.  The probability of accepting an edge is thus $\frac{4}{9}$.  We stop the generation when we have finished generating the $n$ edges, and we denote the resulting graph by $G_1$. We let $G_2$
be the graph obtained by adding all the rejected edges back to $G_1$. It is clear that $\chi(G_1) \le \chi(G_2)$.  Moreover, $G_2$ is a random
graph ${\mathcal G}_A(m_a+m_b,n')$ with $n'=\frac{9}{4} n\pm O(\sqrt{n})$ with probability $1 - \frac{1}{n^{10}}$.
According to Theorem 3.3(i) in \cite{janson2008susceptibility}, for a graph $G_3={\mathcal G}(m_a+m_b,n')$:
\begin{eqnarray*}
\chi(G_3) &\le& \frac{m_a + m_b}{m_a + m_b - 2n'} = \frac{3 m_b}{3 m_b - 2 \cdot \frac{9}{4} n} \\
&\le& \frac{3 \frac{n}{0.65}}{3 \frac{n}{0.65} - 2 \cdot \frac{9}{4} n} = O(1).
\end{eqnarray*}
We can use the same argument as in the proof of Theorem~\ref{thm:sus2}
to show that the susceptibility for a graph ${\mathcal G}_A(m_a+m_b,n')$
is the same as for a graph ${\mathcal G}(m_a+m_b,n')$ which concludes the proof.
\end{proof}

The following theorem concludes that the time complexity of \texttt{add} is $O(1)$.
\begin{theorem}
    \label{thm:insertion}
Assuming $h_a, h_b$ are randomly selected from a family of fully random hash functions, an insertion into an \dtsn with $n$ existing names will take constant amortized expected time 
when $m_a=m_b$, or when $m_a=2m_b$ and $n\leq 0.65m_b$.
\end{theorem}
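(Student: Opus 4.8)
The plan is to combine the case analysis of the \texttt{add} operation (Section~\ref{sec:nameadd}) with the susceptibility bounds of Theorems~\ref{thm:sus2} and~\ref{thm:sus3}. Recall that inserting a new name $k$ creates the edge $e=(u_{h_a(k)},v_{h_b(k)})$ in $G$, and exactly one of two things happens. In Case~II the two endpoints lie in different connected components; we merge them and, since at least one component can retain its current coloring while the other is either left alone or globally flipped, the work is $O(|\cc(u_{h_a(k)})|+|\cc(v_{h_b(k)})|)$ (flipping the smaller side gives an $O(\min)$ bound, but the sum already suffices). In Case~I the endpoints lie in the same component, the new edge closes a cycle, and we rebuild via \texttt{construct}; by Theorem~\ref{thm:acyclic} the parameter slack ($c$ is bounded away from $1$ in both regimes) keeps the number of hash re-selections $O(1)$ in expectation, so a rebuild costs $O(n)$ expected. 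The crucial observation is that because $k$ was not previously in $S$ and the hash family is fully random, $h_a(k)$ and $h_b(k)$ are fresh uniform values independent of the current graph, so $u_{h_a(k)}$ and $v_{h_b(k)}$ behave exactly like the uniformly random node $w$ in Theorems~\ref{thm:sus2} and~\ref{thm:sus3}.

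For Case~II, taking expectation over the fresh hash values and over the random graph, $\E[|\cc(u_{h_a(k)})|]=\E[|\cc(v_{h_b(k)})|]=\E[\chi(G)]$, which is $O(1)$ --- indeed at most $4$ up to the parameter slack --- by Theorem~\ref{thm:sus2} when $m_a=m_b$ and by Theorem~\ref{thm:sus3} when $m_a=2m_b$ and $n\leq 0.65m_b$. Hence the expected cost of a Case~II insertion is $O(1)$.

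For Case~I, I would bound its probability through the susceptibility as well. Case~I occurs iff $u_{h_a(k)}$ and $v_{h_b(k)}$ fall in the same component, which, for a fixed $G$, has probability $\frac{1}{m_am_b}\sum_c |U_c|\,|V_c|$, summing over components $c$ with $|U_c|$ left- and $|V_c|$ right-vertices. Since $|U_c|\,|V_c|\leq\big(\tfrac{|U_c|+|V_c|}{2}\big)^2\leq |c|^2$ with $|c|=|U_c|+|V_c|$, and since $\frac{1}{m_a+m_b}\sum_c |c|^2$ is precisely $\E_w[|\cc(w)|]=\chi(G)$, we get $\Pr[\text{Case I}]\leq\frac{(m_a+m_b)\,\E[\chi(G)]}{m_am_b}=O\!\big(\tfrac1n\big)$, using $m_a,m_b=\Theta(n)$. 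Multiplying by the $O(n)$ cost of a rebuild, the expected contribution of Case~I per insertion is $O(1)$. Adding the two cases, a single insertion costs $O(1)$ in expectation; the word \emph{amortized} absorbs the fact that the occasional $O(n)$ rebuild is spread over $\Theta(n)$ insertions.

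I expect the main obstacle to be the distributional bookkeeping needed to legitimately invoke Theorems~\ref{thm:sus2} and~\ref{thm:sus3} along a \emph{sequence} of insertions: immediately after a rebuild the graph is a clean sample of the rejection model ${\mathcal G}_A(m_a,m_b,\cdot)$, and each subsequent Case~II insertion adds one uniform edge conditioned on not creating a cycle, which by the consistency of that model should keep the law equal to ${\mathcal G}_A(m_a,m_b,n')$ as $n'$ grows --- but this requires the ``parallel edge counts as a cycle'' convention and the conditioning on ``no cycle so far'' to be handled carefully, and one must also check that $\E[\chi]=O(1)$ survives the step from $n$ to $n+1$ edges, which it does thanks to the constant-factor room in the constraints ($m_a\geq 1.33n$, resp.\ $n\leq 0.65m_b$). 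A minor additional point is the bookkeeping for the rare event that adding a name forces $m_a$ (a power of two) to grow, triggering a full rebuild, which is again amortized away.
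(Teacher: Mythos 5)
Your proposal follows essentially the same route as the paper's proof: both treat the fresh hash values of the new name as uniformly random endpoints, bound the Case~II cost by the expected component size $\chi(G)=O(1)$ via Theorems~\ref{thm:sus2} and~\ref{thm:sus3}, and bound the Case~I contribution by showing the cycle-creating event has probability $O(\chi(G)/n)$ so that the $O(n)$ reconstruction amortizes to $O(1)$. Your version of the Case~I probability bound is written out slightly more explicitly (summing $|U_c|\,|V_c|$ over components rather than conditioning on one endpoint), and your closing caveats about the evolving distribution of the graph and the power-of-two resizing are issues the paper's proof also leaves implicit, but the argument is the same.
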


\begin{proof}
    In the algorithm described in Section.~\ref{sec:nameadd},
during an insertion, we have to add an edge that connects a randomly selected node $u\in U$ to another randomly selected node $v \in V$.
We will first bound the amortized expected cost of insertions
that fall in \textit{Case I} and then the induced cost of insertions that fall in \textit{Case II}.
Let $|\cc(w)|$ be the size the connected component that contains node $w$.
Let $|\cc_b(w)|\leq |\cc(w)|$ be the number of nodes in $\cc \cup V$. 

The probability
that node $v$ falls in the same connected component as node $w$ is $\frac{|\cc_b(w)|}{m_b}\leq \frac{|\cc(w)|}{m_b}$
which is the probability of reconstruction. Since the reconstruction
takes expected $O(n)$ time, the amortized expected cost is $\frac{|\cc_b(w)|}{m_a}\cdot O(n)=O(|\cc(w)|)=O(1)$.

For \textit{Case II}, the cost is clearly $O(|\cc(w)|+|\cc(v)|)=O(1)$, since we have to traverse
the connected component that results from merging the two connected components that contain $w$
and $v$.
\end{proof}

Note we have  a rigorous proof for Option 1 but Option 2 provides slightly better empirical results. It is reasonable to conjecture that Theorem~\ref{thm:insertion} also holds for $m_a=2m_b$ without the constraint $n\leq 0.65m_b$.

\begin{figure}[t]
    \centering
    \includegraphics[width=0.75\linewidth]{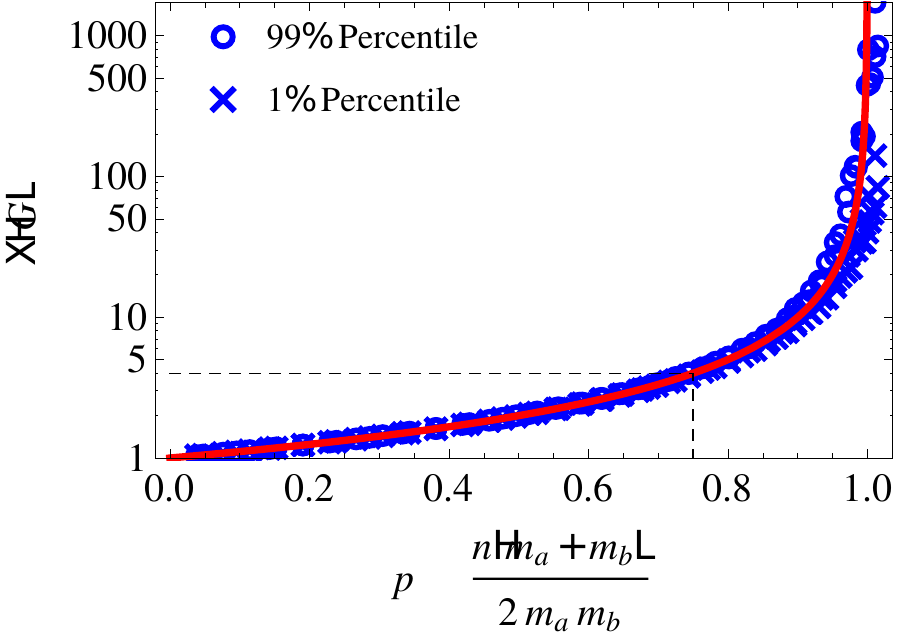}
    \caption{$\chi(G)$ of acyclic graphs vs parameter $p$. Red curve:$ \frac{1}{1-p} $ }
    \label{fig:susbipartite}
\end{figure}

\textit{Numerical estimation of $\chi(G)$}

We conjecture that $\frac{1}{1-p}$, where $p=\frac{n(m_a+m_b)}{2m_am_b}$
is a good estimation for $\chi(G)= E[|\cc(w)|]$,
and present numerical simulation to support our conjecture.
We generate   acyclic bipartite graphs with random $m_a$, $m_b$, and $n$ values (within the range 10K $\sim$ 1M). Then we compute their $\chi(G)$ value.
For a particular $p=\frac{n(m_a+m_b)}{2 m_a m_b}$ value, we randomly sample at least 500 graphs with different $m_a$,$m_b$, and $n$.
In Figure \ref{fig:susbipartite},  we plot the 1-th and 99-th percentile of $\chi(G)$.

As shown in Figure \ref{fig:susbipartite}, when $p$ is not so close to 1, the sampled $\chi(G)$ values are very close to $\frac{1}{1-p}$.
    When $p$ grows larger,
    the sampled $\chi(G)$ values tend to grow slower than $\frac{1}{1-p}$. Hence we conclude that $\frac{1}{1-p}$ is
a good upper bound for $\chi(G)$.
In \dtsn, $\frac{4}{3}n \leq m_a < \frac{8}{3}n$, $n \leq m_b < 2n$. $m_a$ and $m_b$ must be powers of $2$.
For this choice of parameters we can see that $p = \frac{n(m_a+m_b)}{2 m_a m_b} \leq 0.75$
and so $\chi(G) \leq 4$ which is a small constant.

This estimated value $\chi(G)=\frac{1}{1-p}$ is in coherence with the evaluation results on \sysn updates shown in Figure \ref{fig:updateInsert}.

\end{TON}

\textbf{\dtsn size growth.}
After adding a name into \dtsn, $n=|S|$ grows and may violate $m_a \geq 1.33n$ and $m_b\geq n$.
However,  \dtsn  works correctly as long as $G$ is acyclic, even when $m_a < 1.33n$ or $m_b <n$. Hence, \dtsn does not deal with the requirement on $m_a$ and $m_b$ explicitly for additions.
Although the $E[|\cc|]$ value may grow as more names are added to \dtsn, it is always smaller than $10$ in our experiments.
The expected time to add a name to \dtsn is still $O(1)$ in practice.

When adding a new name falling in Case I, the values of $m_a$ and $m_b$ will be updated by $\mathtt{construct}$, which guarantees $m_a \geq 1.33n$ and $m_b \geq n$.

\subsubsection{Set change for a name}

Operation $\mathtt{alter}(k)$ is used to move a name $k$ from set $X$ to set $Y$\,(or from $Y$ to $X$).
The bitmaps $\arra$ and $\arrb$ should be modified so that $\tau(k)$ is changed from $0$ to $1$ (or from 1 to 0).
The graph $G$ does not change during $\mathtt{alter}(k)$.  We only need to change the coloring plan of the connected component that contains the edge $ e = (u_{h_a(k)}, v_{h_b(k)})$. One approach is to ``flip'' the colors of all vertices at one side of $e$, i.e., to change 0 to 1, and to change 1 to 0.
The amortized time cost is $O(1)$.

\subsubsection{Name deletion}
$\mathtt{delete}(k)$ can be done by simply removing the edge $(u_{h_a(k)},v_{h_b(k)})$ in the graph $G$. The bitmaps $\arra$ and $\arrb$ are not modified because the values of $\tau(k)$ after deleting $k$ do not matter anymore. The time complexity  is $O(1)$.

\crunchBSS
\subsection{Query structure and control structure}
\crunchASS
Each \dtsn is  a seven-tuple
$\langle m_a,m_b, h_a, h_b, \arra, \arrb,G \rangle$.
Note that for a query on \dtsn, only the first six elements are necessary for computing the $\tau$ value. The information stored in $G$ is not needed for the query operation. Hence, we let the switches only maintain the six-tuple
$\langle m_a,m_b, h_a, h_b, \arra, \arrb\rangle$ in their local memory, namely the \textit{Query structure}.
Storing this six-tuple takes $2m+O(1)$ bits of memory space. 
The time cost for each query of \dtsn is equal to the sum of the cost of computing two hash values, two memory accesses for the two bitmaps, and one $\mathtt{XOR}$ arithmetic operation.

In comparison, the network controller maintains the seven-tuple, namely the \textit{Control Structure}.
The controller is responsible for maintaining the FIB of the switches in the network. The switches execute the queries on the query structures.

\crunchBSS
\subsection{Summary of \dtsnintitlelarge Properties}
\crunchASS
An \dtsn is decomposed into a query structure running in the data plane and a control structure in the control plane.\,The query structure uses $\leq 4n$ bits for $n$ names. Every query takes a small constant time including computing two hash values and two memory accesses. The control structure uses $O(n)$ bits. The expect time complexity is $O(n)$ for construction and $O(1)$ for name addition, deletion, and set change.
Note that \emph{the distribution of names in $X$ and $Y$ has no impact on the space and time cost of \dtsn}, because $G$ only depends on $S$ and $\langle h_a, h_b \rangle$.
In Sec. \ref{sec:POGidea}, we demonstrate the extension of \dtsn. It classifies names into $d>2$ disjoint sets, while still requiring small memory and constant query time.

\crunchBSS
\section{System Design of \sysnintitleLarge}
\crunchASS
\label{sec:FIB}
We present how to build \sysn using the \dtsn data structure as follows. 
The design also includes the implementation details of FIB update and concurrency control.

\crunchASS
\subsection{Extension of \dtsnintitlelarge for Network Lookups}
\crunchBSS
\label{sec:POGidea}
The extension of \dtsn to support classification for more than two sets is called a Parallel Othello Group (POG).
An $l$-POG is able to classify names into $2^l$ disjoint sets. It serves as a FIB with $2^l$ forwarding actions.
Let $Z_0, Z_1, \cdots, Z_{2^l-1}$ be the $2^l$ disjoint sets of names.
Let $S = Z_0 \cup Z_1 \cup \cdots \cup Z_{2^l-1}$.
    A query on the $l$-POG for a name $k \in S$ returns an $l$-bit integer $\tau(k)$, indicating the index of the set that contains $k$, i.e., $k \in Z_{\tau(k)}$.

The idea of POG is as follows.
Consider $l$ \dtsn{s} $\othO_1, \othO_2, ..., \othO_l$.
Each $\othO_i$ classifies keys in set $X_i$ and $Y_i$ ($1\leq i \leq l$), where $X_i$ and $Y_i$ satisfies:
$$
X_i = \bigcup_{(j \bmod 2^i)< 2^{i-1}} Z_j \text{;} \qquad
Y_i = \bigcup_{(j \bmod 2^i) \geq 2^{i-1}} Z_j\text{.}
$$
Let $\tau_i(k)$ be the query result of $\othO_i$ for name $k$.
Consider the $l$-bit integer
$((\tau_l(k)\tau_{l-1}(k)\cdots\tau_1(k))_2$.
Note that $\tau_i(k)\!=\!0$ if and only if $k\!\in\!X_i$.
Meanwhile, $Z_{\tau(k)} \subset X_i$ if and only if
$(\tau(k) \mod 2^i) < 2^{i-1}$ (the $i$-th least significant bit of $\tau(k)$ is 0).
Hence, the $i$-th least significant bit of $\tau(k)$ equals to $\tau_i(k)$. i.e,
$$\tau(k) = ((\tau_l(k)\tau_{l-1}(k)\cdots\tau_1(k))_2$$

For each $i$ ($1\leq i\leq l$), $X_i \cup Y_i = S$. i.e., the $l$ {\dtsn}s share the same $S$.
Recall that the edges in $G$ is determined by only $S=X\cup Y$ and $\langle h_a, h_b \rangle$, and $\langle h_a, h_b\rangle$ is decided during $\mathtt{construct}$ by $S$.
The $l$ \dtsn{s} may share the same $\langle h_a, h_b \rangle$ and same edges in $G$. However, the bitmaps in different \dtsn{s} are different.

\textbf{Parallelized execution with bit slicing.}
Each operation of an $l$-POG consists of operations on the $l$ \dtsn{s}.
Using the bit slicing technique, these operations can be executed in parallel. The bit slicing technique is widely used to group executions in parallel~\mbox{\cite{Anand2010}}.
An $l$-POG query structure includes $l,m,h_a,h_b$ and two vectors $A$ and $B$.
  Each of $A$ and $B$ contains $m$  $l$-bit integers.
  Consider all the $i$-th bits of the elements in $A$. These bits can be viewed as a \textit{slice} of the array $A$. The $i$-th slice of $A$ is used to represent bitmap $\arra_i$. The slices of $B$ are defined similarly.
Using this technique, $\tau(k)$ can be computed using one arithmetic operation by: $$\tau(k) = A[h_a(k)] \oplus B[h_b(k)] $$

 When $l$ is not larger than the word size of the platform, each $l$-POG query only requires two memory accesses  for fetching $A[i]$ and $B[j]$.
 The arithmetic operation includes computing the hash functions and the $\mathtt{XOR}$.

    All \dtsn operations can be decomposed into two steps: (1) modifications on $G$, (2) operations on some bits in $\arra$ and $\arrb$.
In an $l$-POG, the $l$ Othellos share the same $G$ and \textit{the first step is only executed once} for \textit{all} $l$ \dtsn{s}.
Hence the bit slicing technique also applies to all other operations of POG.

Therefore, the expected time cost of each name addition, deletion, or set change operation is only $O(1)$, instead of $O(l)$. The time complexity of POG construction is still $O(n)$.

\crunchASS
\subsection{Selection of Hash functions}
\crunchBSS
\label{sec:hashselection}

The hash function pair is critical for system efficiency.
Ideally,  $h_a$ and $h_b$ should be chosen from a family of fully random and uniform hash functions.
Similar to the implementation of CuckooSwitch \cite{CuckooSwitch}, we apply a function $H(k,\mathtt{seed})$ to generate the hashes in our implementation. Here, $H$ is a particular hashing method and  $\mathtt{seed}$ is a 32-bit integer. We let $h_a(k) = H(k,\mathtt{seed}_a)$ and $h_b(k) = H(k,\mathtt{seed}_b)$.
Thus, $\langle h_a, h_b \rangle$ is uniquely determined by a pair of integers $\langle \mathtt{seed}_a, \mathtt{seed}_b\rangle$.

The proper hashing method $H()$ is platform";dependent.
\sysn uses the CRC32c function
for robust and faster hash results, which is then effectively mapped to a $t$-bit integer value where $m_a = 2^t$ or $m_b = 2^t$.
Evaluation shows that CRC32c demonstrates desirable performance in practice.

\begin{TON}

\subsection{FIB Update and Concurrency Control}
\label{sec:concurr}

We assume that there is one logically centralized controller in the network.
Upon network dynamics, the controller computes the POGs for a number of switches and update the query structures in the switches by FIB update messages using a standard SDN API. 
If $m, h_a, h_b$ do not change during the update, an update message only contains a list of elements to be modified in $A$ and $B$.
Otherwise, it contains the full query structure of $l$-POG $\langle m,h_a,h_b,A,B \rangle$.

After receiving a FIB update message, a \sysn switch modifies its
POG query structure.
Instead of locks, \sysn uses simple bit vectors to prevent read-write conflicts in the query structure.
Experimental results show that the concurrency control mechanism
has a negligible impact on the network performance.

While each POG query is computed using two elements in $A$ and $B$, there is a chance of a read-write conflict during the update.
In \sysn, the $\mathtt{query}$ always returns correct result. Such concurrency issue is addressed as follows.

\textbf{Concurrency requirements.}
Let $A, B$ be the two vectors of the query structure before an update and $A', B'$ be the ones after the update.
For a name $k$ that exists in the FIB before and after the update, suppose $i\!=\!h_a(k)$ and $j\!=h_b(k)$. Both $A[i] \oplus B[j]$ and $A'[i] \oplus B'[j]$ are considered as correct actions, although they may be different.
Note that, when $A[i]=A'[i]$,
the values $A'[i] \oplus B[j]$
 and
 $A[i] \oplus B'[j]$
 are both correct query results, no matter how read/write events are ordered.
Inconsistency only happens when
both $A[i]$ and $B[j]$ are changed during the update.

\textbf{Concurrency control design.}

\sysn observes whether the vector $A$ is being modified.
For a query for name $k$,
if an update that affects $A[i]$ is being executed,
\sysn does not execute the query until the update finishes.
\sysn maintains two bit vectors $D_1$ and $D_2$ for concurrency control.
All bits in $D_1$ and $D_2$ are set to $0$ during the initialization.
Each index $i$ ($0\leq i < m$) corresponds to an index $p(i)$ in $D_1$ and $D_2$.
The lengths of $D_1$ and $D_2$ are set to 512 bits and $p(i) = i \mod 512$.

\textbf{Update procedure.}
A pseudocode of the update procedure is described in Algorithm \ref{alg:update}.
Before an update of the POG that will change some elements of $A$,
\sysn flips the corresponding bits in $D_1$, i.e., change 0s to 1s and 1s to 0s.
After the update, it flips the bits with same indexes in $D_2$.
For any index $i$, when \sysn observes $D_1[p(i)]\!\neq\!D_2[p(i)]$,
there must be no ongoing update that affects $A[i]$. 
Note that even if a bit index corresponds to multiple elements that are changed in an update, the bit is only flipped once.

\begin{algorithm}[t]
 \KwData{New value at some indexes in $A$ and $B$: $A[i_1], A[i_2], \cdots$, $B[j_1], B[j_2], \cdots.$}
 \KwResult{Updated \sysn query structure}
  \nl \textit{Affected} $ \gets \emptyset$\;
 \nl \ForEach{ $i  \in \{ i_1, i_2, \cdots \}$}{
  \nl \textit{Affected} $\gets$ \textit{Affected} $ \cup \{ i \mod 512 \}$
 }
 \nl \ForEach{ $i \in $ \textit{Affected}}{
 \nl $ D_1[i] \gets 1 \oplus D_1[i]$
 }
  \nl\tcp{reorder barrier}
\nl Update  $A[i_1], A[i_2], \cdots$, $B[j_1], B[j_2], \cdots.$ \;
\nl  \tcp{reorder barrier}
\nl \ForEach{ $i \in $ \textit{Affected}}{
\nl $ D_2[i] \gets 1 \oplus D_2[i]$
 }
 \caption{Update procedure for \sysn}
 \label{alg:update}
\end{algorithm}

\begin{algorithm}[t]
 \KwData{\sysn query structure and name $k$}
 \KwResult{Query result $\tau(k)$}
\nl $i \gets h_a(k)$\;
\nl $j \gets h_b(k)$\;
\nl $p \gets i \mod 512$\;
\nl \While{true}{
\nl $\boldsymbol{\delta}_2 \gets D_2[p]$\;
\nl   \tcp{reorder barrier}
\nl   $\alpha \gets A[i]$\;
\nl   $\beta \gets B[j]$\;
\nl      \tcp{reorder barrier}
\nl       $\boldsymbol{\delta}_1 \gets D_1[p]$\;
\nl        \If{$\boldsymbol{\delta_2} = \boldsymbol{\delta_1}$}{
\nl \KwRet{$\alpha \oplus \beta$}
        }
 }
 \caption{Query procedure on \sysn}
 \label{alg:query}
\end{algorithm}

\textbf{Query procedure. }
A pseudocode of the query procedure is described in Algorithm \ref{alg:query}
The query procedure for name $k$ includes the following three steps.
(1) Fetch the bit $\boldsymbol{\delta}_2\!=\!D_2[p(i)]$.
(2) Fetch the value of $A[i]$ and $B[j]$.
(3) Fetch $\boldsymbol{\delta_1}\!=\!D_1[p(i)]$. If $\boldsymbol{\delta_2}\!=\!\boldsymbol{\delta_1}$,
compute $A[i] \oplus B[j]$ and return it as the query result.
Otherwise, $\boldsymbol{\delta_2} \neq \boldsymbol{\delta_1}$ and we know that the POG is currently being updated and the update affects $A[i]$.
The query for $k$ will stop and
is put in a later place of the query event queue.
\sysn uses reordering barrier instructions to ensure the execution order in both update and query procedures.

Here, the order of flipping
$D_1[p(i)]$ and $ D_2[p(i)]$
during an update
and the order of getting their values during a query are different.
Any updates that affect $A[i]$ and start during a query must result in
$\boldsymbol{\delta_2} \neq \boldsymbol{\delta_1}$.

The above procedures of update and query
should be executed in the given explicit order.
This can be specified by compiler reorder barriers on strong memory model platforms such as x86\_64, or fence instructions on weak memory model platforms such as ARM.

\end{TON}

\section{Implementation and Evaluation}
\crunchASS
\label{sec:evaluation}

We implement \sysn on three  platforms and conduct extensive experiments to evaluate its performance.
\crunchBSS
\subsection{Implementation Platforms}
\crunchASS
\textbf{1. Memory-mode.} We implement the POG query and control structures,  running on different cores of a desktop computer.
In addition, we use a discrete-event simulator to simulate other data plane functions such as queuing.
The memory-mode experiments are used to compare the performance of the algorithms and data structures. They demonstrate the maximum lookup speed that \sysn is able to achieve on a computing device by eliminating the I/O overhead.

\textbf{2. Click Modular Router} \cite{click} is an architecture for building configurable routers.
We implement an \sysn prototype on Click.
It is able to serve as switch that forwards data packets.

\textbf{3. Intel Data Plane Development Kit (DPDK)}
\cite{DPDK} is widely used in fast data plane designs.
We use a virtualized environment to squeeze both the traffic generator and the forwarding engine on the same physical machine. This prototype is able to serve as a real switch that forwards data packets.

\crunchBSS
\subsection{Methodology}
\crunchASS
We compare \sysn with three approaches for name switching: (1) Cuckoo hashing\,\cite{CuckooHashing} (used in CuckooSwitch~\cite{CuckooSwitch} and ScaleBricks~\cite{ScaleBricks}), (2) BUFFALO~\cite{buffalo},
and (3) Orthogonal Bloom filters.
CuckooSwitch~\cite{CuckooHashing} is optimized for a specific platform with 16 cores and 40 MBs of cache. ScaleBricks~\cite{ScaleBricks} is designed for a high performance server cluster.
We were not able to repeat their experiments on commodity desktop computers.
Instead, we compare \sysn with \texttt{(2,4)}-Cuckoo hashing, which is their FIB, by reusing the code from the public repository of CuckooSwitch.
BUFFALO does not always return correct forwarding actions. The false positive rate is set to at most $0.01$\%.
We also implement a new technique called Orthogonal Bloom filters (OBFs) for comparison.
It uses a Bloom filter to replace an Othello for classification of two sets $X$ and $Y$: all names in $X$ hit the Bloom filter. The false positive rate is also set to at most $0.01\%$. The other design of OBFs is similar to \sysn.

We do not include SetSep \mbox{\cite{SetSepHotOS}} in this section although it shares some similarity to \dtsn. The SetSep work \mbox{\cite{ScaleBricks}} does not include an update method and  was not proposed for FIBs.
Also, there is no explicit update algorithm for SetSep in every work in which it has been used \mbox{\cite{SetSepHotOS}}\mbox{\cite{ScaleBricks}}. Hence, SetSep cannot be directly used for FIBs and it is not suitable to implement SetSep and compare it with other FIB designs. Actually our experiments using a static version of SetSep show that \sysn is faster than SetSep for name lookups.

\subsubsection{Performance metrics}
\textbf{Data plane performance metrics}
are used to characterize the performance of the \sysn query structure in switches.

\emph{Memory cost:} the size of memory needed to store a FIB.

\emph{MCQ:} the \textit{m}aximum number of \textit{C}ache lines transmitted per \textit{Q}uery. During each memory access, a cacheline (usually 256 bits of data in many architectures) is transmitted from memory to the CPU. It is used to characterize the time cost of a query.

\emph{Query throughput:} the number of queries that a FIB is able to process per second.

\emph{Query throughput under update:} the query throughput measured when the FIB is being updated.
It reflects the effectiveness of the concurrency control mechanism.

\emph{Processing delay:} the processing delay of the query structure for a packet.
It reflects the ability of the data plane to process burst traffic.
Such metric is measured using an event-based simulator on real traffic trace.

\textbf{Control plane performance metrics }
characterize the performance of the \sysn control structure in the controller.

\emph{Construction time:}
the time to construct a FIB. Note that, for some networks in which $G$ is shared among all switch FIBs such as Ethernet, not every FIB requires the entire construction time. Once $G$ is determined, it can be reused for all switches.

\emph{Update throughput:} the number of updates that can be processed by the control structure per second. Here, an update may consist in adding a name, deleting a name, or changing the forwarding action of a name.

\subsubsection{Evaluation environment and settings}
\label{sec:settings}

\textbf{LFSR name generator}
In the experiments, a series of query packets with different names were generated and fetched by the FIB.
One straightforward approach is to feed the FIB with a publicly available traffic trace.
However, the time for transmitting the data from the physical memory to the cache is too large compared to the FIB query time.
Hence, to conduct more accurate measurement, we use a
linear feedback shift register (LFSR) to generate the names.
 One LFSR generates about 200M names per second on our platform.
In addition, we provide event-based simulation using real traffic data to study the processing delay on \sysn.

In fact, LFSR gives no favor to \sysn  because the names are generated in a round-robin scenario, which provides the minimum cache hit ratio.
LFSR traffic is actually the \textit{worst} traffic for \sysn.
On the contrary, in denial-of-service attack traffic, the queries concentrate on one or few names, and they always hit the cache. Hence, the query throughput of \sysn in DoS attack traffic may be higher than the value measured with LFSR traffic. We believe the result measured in LFSR traffic reflects the true performance of \sysn.

\textbf{Evaluation Settings}
In the following section, unless specified otherwise, we evaluate the performance of \sysn with 4 parallel query threads. The number of action is set to 256 ($l=8$).
We conduct all experiments on a commodity desktop computer equipped with one Core i7-4770 CPU  (4 physical cores @ 3.4 GHz, 8 MB L3 Cache shared by 8 logical cores)
and 16 GB memory (Dual channel DDR3 1600MHz).

\setlength\tabcolsep{4pt}
\begin{table*}[ht!]
    \centering
\begin{tabular}{ccc||cc|cc|cc|cc}

    \hline
    \multicolumn{3}{c||}{FIB Example} &
    \multicolumn{2}{c|}{\textbf{\sysn}} &
    \multicolumn{2}{c|}{Cuckoo} &
    \multicolumn{2}{c|}{BUFFALO} &
    \multicolumn{2}{c}{OBFs}

    \\
    Name Type  & \# Names & \# Actions &  Mem & MCQ & Mem & MCQ & Mem & MCQ & Mem & MCQ\\

    \hline
    MAC  (48 bits) & $7\! \times\! 10^5$ & 16    & 1M & 2 & 5.62M & 2     & 2.64M   & 8    & 7.36M & 15\\

    MAC  (48 bits) & $5\! \times\! 10^6$ & 256   & 16M & 2 & 40.15M & 2   & 27.70M  & 8   & 112.06M & 16\\
    MAC  (48 bits) & $3\! \times\! 10^7$ & 256   & 96M & 2 & 321.23M & 2 & 166.23M & 8 & 672.34M & 16 \\
    IPv4  (32 bits) & $1\! \times\! 10^6$  &16  & 1.5M & 2 & 4.27M & 2 & 3.77M & 8 & 10.52M & 15 \\

    IPv6  (128 bits) & $2\! \times\! 10^6$ &256 & 4M & 2 & 34.13M & 6 & 11.08M & 8 & 44.82M & 16 \\

    OpenFlow  (356b) & $3\! \times\! 10^5$ & 256 & 1M & 2 & 14.46M & 6 & 1.67M & 8 & 6.72M & 16 \\

    OpenFlow (356b) & $1.4\! \times\! 10^6$ & 65536 & 8M & 2 & 67.46M & 6 & 18.21M & 1024 & 66.60M & 17 \\

    File name (varied) & 359194 & 16 & 512K & 2 & 19.32M & 10 & 1.35M & 8 & 5.47M & 15 \\

    \hline
\end{tabular}

\caption{ Memory and query cost comparison of four FIBs and SetSep. MCQ: maximum \# of cachelines transmitted per query.
    }
    \label{tbl:memcmp}
\end{table*}

\crunchASS
\subsection{Data plane memory efficiency and MCQ}
\crunchBSS
\label{sec:dataplanememory}
Table \ref{tbl:memcmp} shows the size of memory of different types of FIBs.
For the Cuckoo hash table, we use the \texttt{(2,4)} setting.
For BUFFALO, we assume the names are evenly distributed among the actions, which gives an advantage to it.
We use the setting $k_{max}=8$. These settings are all as described or recommended in the original papers \cite{CuckooSwitch, ScaleBricks, buffalo}.

The memory space used by \sysn is significantly smaller than that of Cuckoo, BUFFALO, and OBFs. It is only determined by the number of names $n$ and the number of actions, and is independent of the name lengths.
Table \ref{tbl:memcmp} also shows the maximum number of cachelines transmitted per query (MCQ) of these FIBs.
A smaller MCQ indicates fewer data transferred from the memory to the CPU, which results in better query throughput.
\sysn always requires exactly two memory accesses per query. The other FIBs may have larger MCQ depending on the name length and number of actions.

\crunchBSS
\subsection{Memory-mode evaluation}
\crunchASS
\subsubsection{Data-plane performance}
\label{sec:memeva}

\begin{figure}[t]
    \centering
    \centering\includegraphics[width=0.70\linewidth]{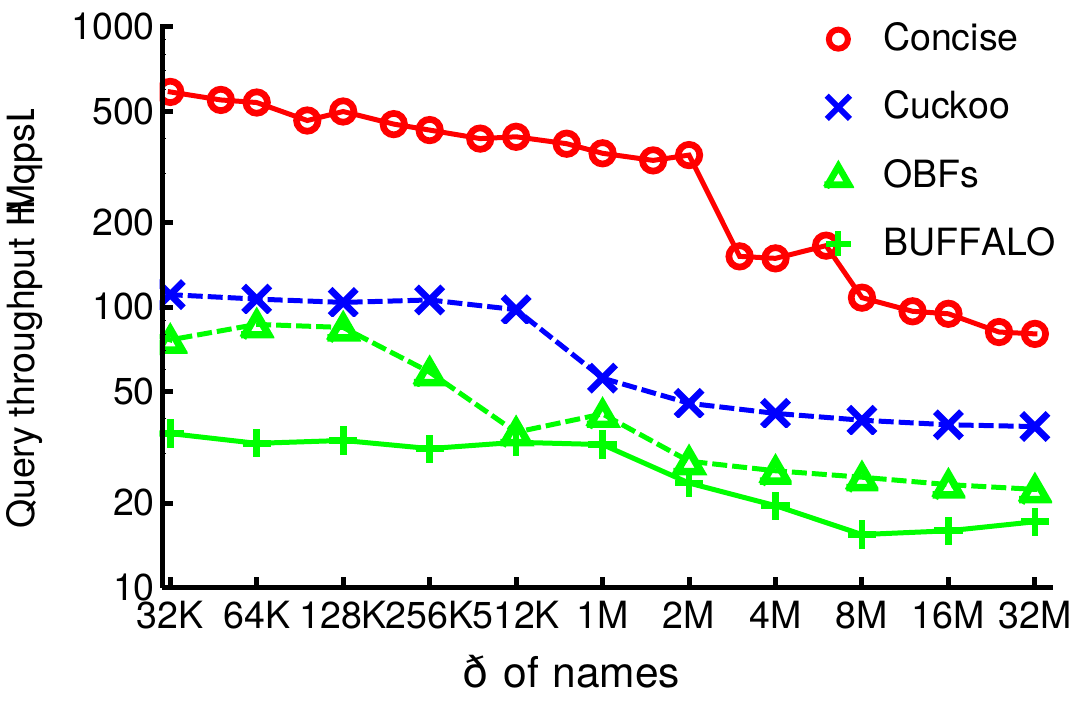}
    \caption{Query throughput versus number of names. }
    \label{fig:queryspeed1}
    \crunchAfterColFig
\end{figure}

\textbf{Query throughput versus number of names.}
Figure \ref{fig:queryspeed1} shows the query throughput of
\sysn, Cuckoo, BUFFALO, and OBFs.
The names are MAC addresses (48-bit).

    When $n$ is smaller than 2 million, the throughput of \sysn is very high ( $>$ 400M queries per second (Mqps)). This is because the memory required by \sysn is smaller than the cache size (8M for our machine). When $n\!\geq$ 2M, the throughput decreases but remains around 100 Mqps. This indicates that if other resources (e.g., I/O and buffer) are not the bottleneck, \sysn reaches 100Mqps.
The query performance decreases as the size of the query structure exceeds the CPU cache size. We observe similar results when running the evaluation on other machines with different CPUs.
Cuckoo has  the highest throughput among the remaining three FIBs but is only about only 20\% to 50\% of \sysn.
The results of Cuckoo are consistent with those presented by the original CuckooSwitch paper\footnote{The paper \cite{CuckooSwitch}\,showed a throughput 4.2x as high as our Cuckoo results on a high-end machine with two Xeon\,E5-2680\,CPUs (16 cores and 40MB L3 cache). It is approximately 4x as powerful as the one used in our experiments.}. Note that the measured time overhead includes that of query generation.\footnote{In the evaluation of 1M names,  each query of \sysn takes about 4.5 ns while generating a query takes 4.1 ns.  }

\textbf{Cost of detecting invalid names}
We also measure the cost of two approaches to detect invalid names. \ref{fig:queryspeed1} shows that using a 8-bit checksum (marked as Concise+Chk in the figure) has a minor impact on the query performance.
We provide more analysis on the approaches in Sec.\,\ref{sec:filter}.

\begin{figure}[t]
     \centering\includegraphics[width=\widthinTriColumn]{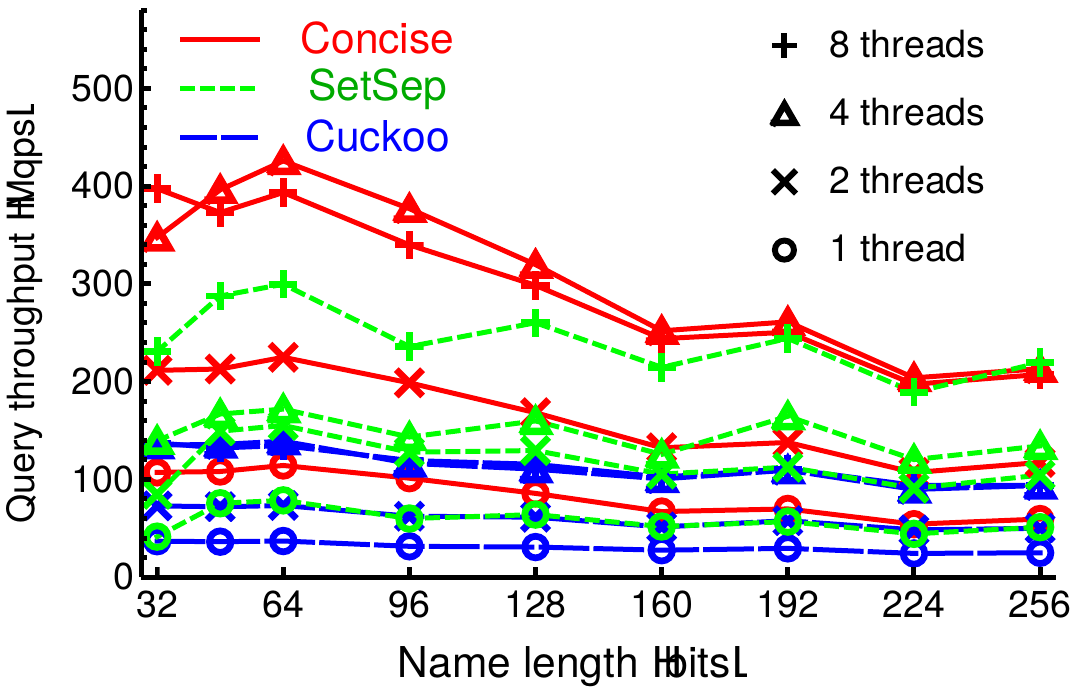}
     \caption{ Query throughput vs. name length}
    \label{fig:queryspeedKeylength}
\end{figure}

\textbf{Query throughput versus name length and number of CPU cores.}
Figure \ref{fig:queryspeedKeylength} shows the query throughput using different name lengths. Each FIB contains 256K names.
As the length grows, the throughput of all types of \sysn and Cuckoo FIBs decreases. Note that the memory size of \sysn is independent of the name length. Hence, the throughput decrease of \sysn is due to the increase of hashing time.
One interesting observation is that when the length is a multiple of 64 bits, the query throughput of \sysn is slightly increased.
This is mainly because the experiments are conducted on a 64-bit CPU.
The query throughput grows approximately linearly to the number of used threads, as long as the number of threads does not exceed the number of physical CPU cores of the platform.

\begin{figure}[t]
    \centering\includegraphics[width=\widthinTriColumn]{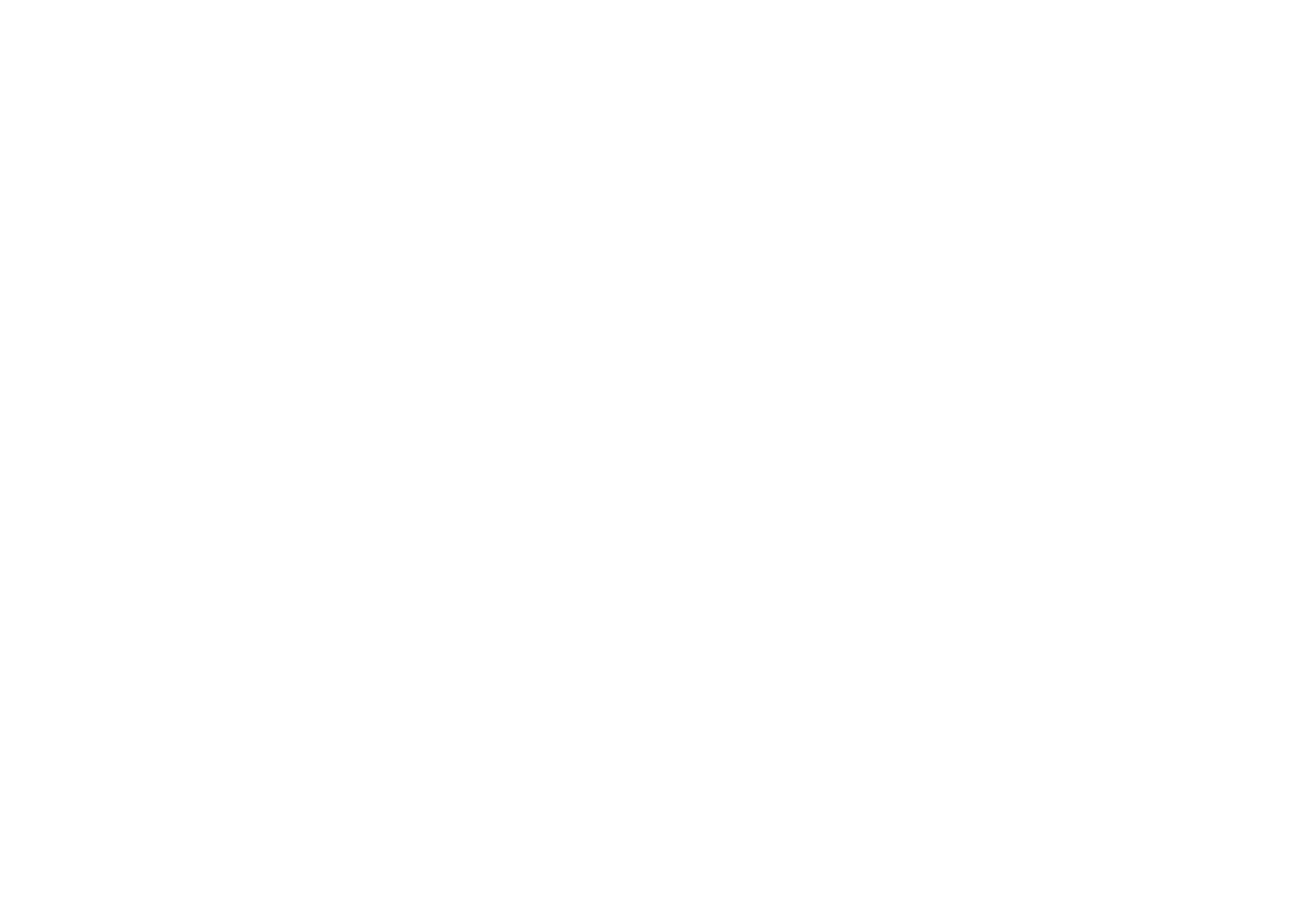}
     \crunchBeforeTriFigCap
    \caption{ \sysn query throughput under different update rates}
    \label{fig:updateThrComb}
\end{figure}

\textbf{Query throughput during updates.}
Figure \ref{fig:updateThrComb} shows the throughput of \sysn during updates, including name additions, deletions, and action changes.
There is only very small decrease of query throughput even when the update frequency is as high as hundreds of thousands of names updated per second.
    We mark the one-$\sigma$ (68\%) confidence interval of the throughput when there is no concurrent query in Figure \ref{fig:updateThrComb}. Evaluation result shows that the throughput of \sysn still remains in its normal range during updates.
For \sysn with 4M names the throughput downgrade is negligible.

\begin{figure}[t]
    \centering\includegraphics[width=\widthinTriColumn]{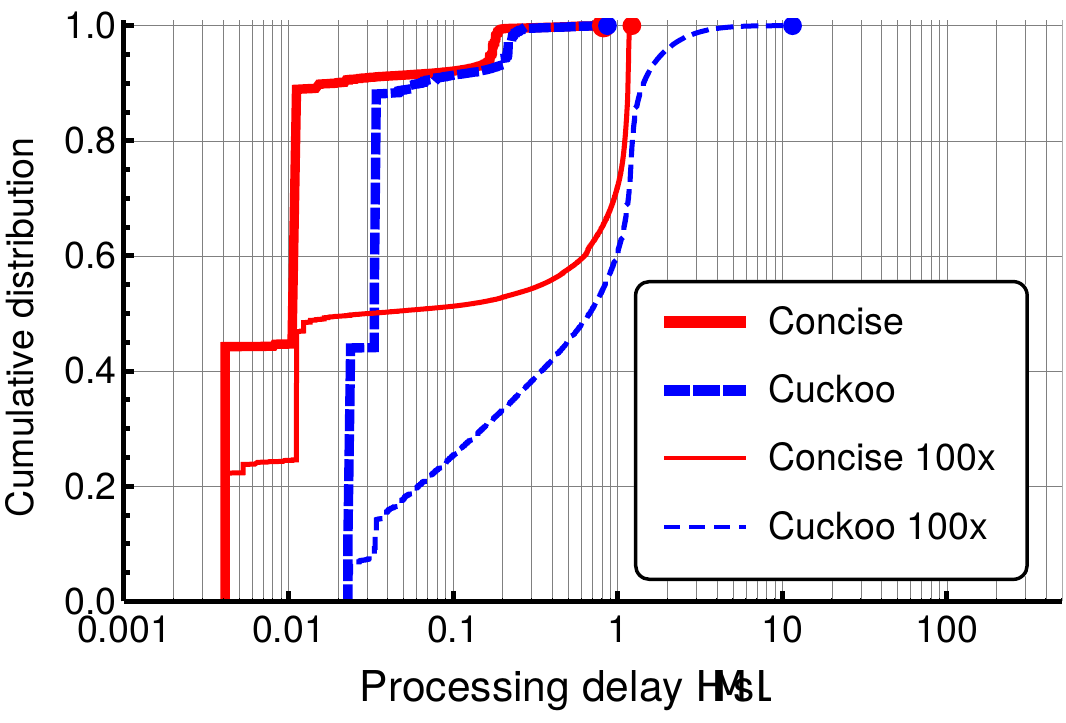}
     \crunchBeforeTriFigCap
    \caption{ CDF of the processing delay of \sysn and Cuckoo}
    \label{fig:CDFLatencyComb}
\end{figure}

\textbf{Processing delay.}
 We conduct event-based simulations of packet processing on the data plane to study the process delay.
We simulate a single-thread processor with two-level cache mechanism. The
packets are processed in a first-come, first-served fashion. Each packet consists of the header and payload.
The packets are put in a queue upon reception and wait to be processed by the prosessor.
We measure the processing delay for real traffic data from the CAIDA\hyphenation{Anony-mized} Anonymized Internet Traces of December 2013~\cite{CAIDAdata}.
  The average packet rate is about 210K packets per second.
  In Figure \ref{fig:CDFLatencyComb}, \sysn has smaller processing delay than Cuckoo before the 90th percentile, but they have similar tails.
  To study the processing delay under larger traffic volumes, we replay the trace 100x as fast as the original. Shown as the thin curves,
the processing delay of \sysn is clearly smaller than that of Cuckoo before the 60th percentile. After that, the two curves are similar, except that Cuckoo has a longer tail. Overall, the processing delay of \sysn is very small ($< 1\mu$s) even under high data volumes.

\begin{figure}[t]
    \centering\includegraphics[width=\widthinTriColumn]{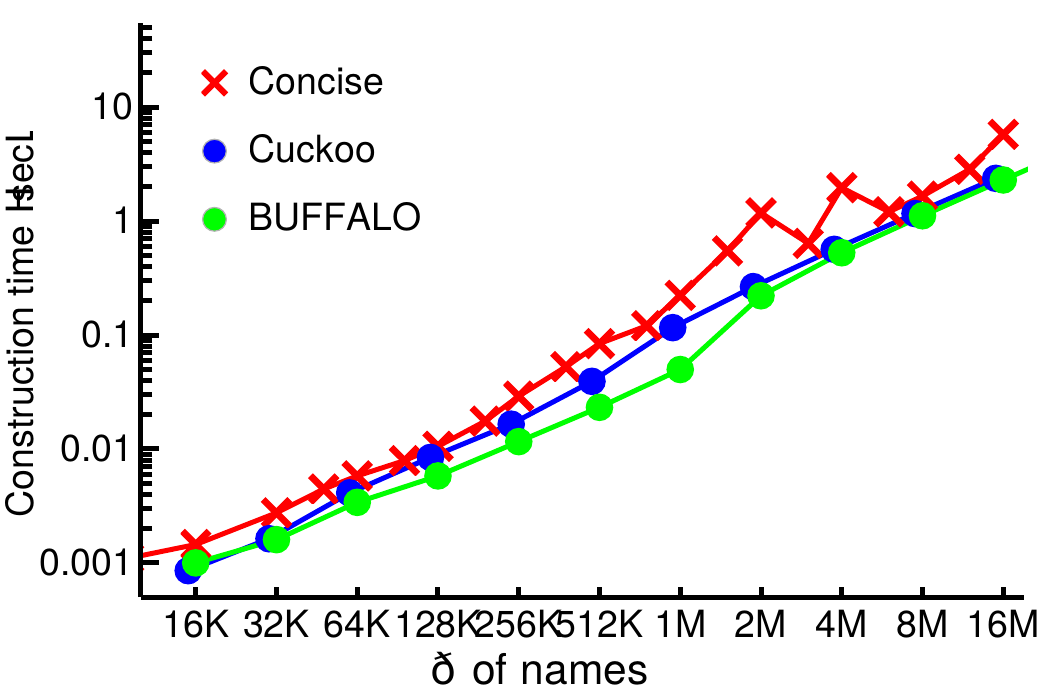}
    \caption{ Construction time comparison among three FIBs}
    \label{fig:buildtimecomp}
\end{figure}

\subsubsection{Control plane performance}
\textbf{Construction time.}
Figure \ref{fig:buildtimecomp} shows the average time to construct the query and control structures for one switch with various number of names.
The construction time of \sysn grows approximately linearly to the number of addresses.
Although the time of \sysn is larger than that of Cuckoo and BUFFALO, it is still very small.
For 4M names, it takes only 1 second to construct the FIB.
Note that the graph $G$ can be reused for all other switches in the network. Hence, network-wide FIB construction only takes few seconds.

\begin{figure}[t]
    \centering\includegraphics[width=\widthinTriColumn]{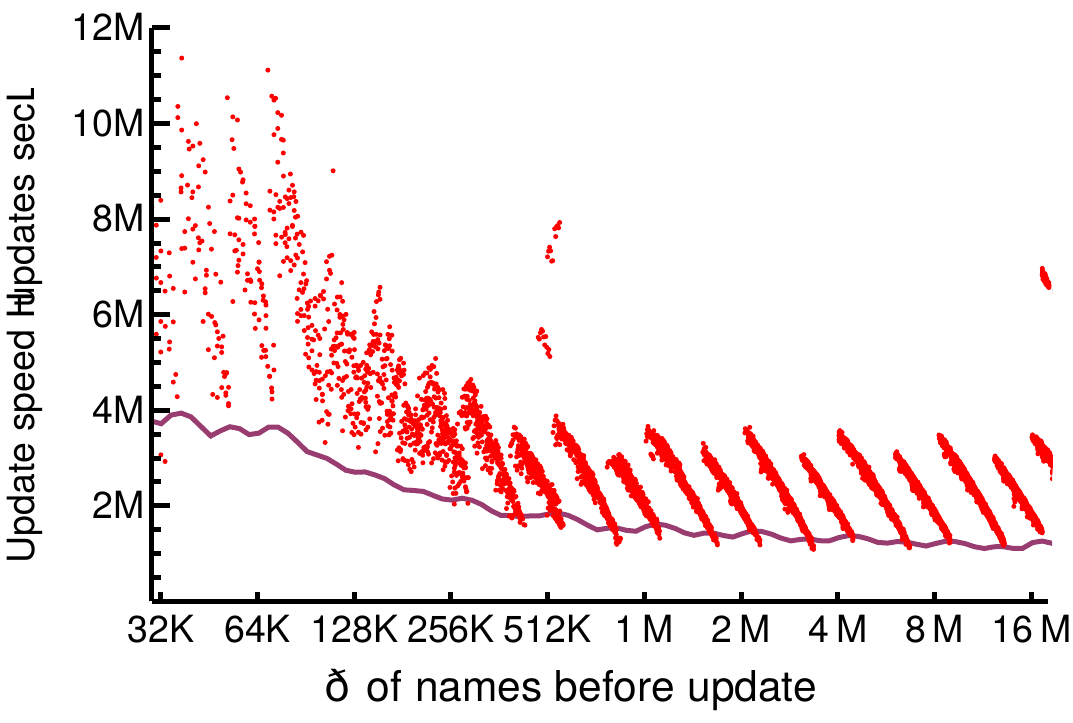}
\caption{ Update speed. Line: avg. spd. including POG reconstruction.}
    \label{fig:updateInsert}
    \end{figure}

\textbf{Update speed.} The update speed indicates the ability to react to network dynamics.
All types of network dynamics, including host and link changes, are reflected as name additions, deletions, and action changes in the FIBs.
Figure \ref{fig:updateInsert} shows the update speed of \sysn in number of updates processed per second.
We vary the number of names before update and measure the time used to insert a number of new names.
Each run of the experiment is shown as a point in the figure.
In most cases, it reaches at least  1M updates per second, which is sufficient for very large networks.

\textbf{On POG reconstruction.}
In some rare cases, adding a new name may require reconstruction of the POG when it introduces a new cycle in to the bipartite graph. This may take non-negligible time (0.2 seconds when there are 1M names). Theoretical results show this happens with probability less than $\frac{1.5}{n}$. This value is even smaller in practice (about 1.3 parts per million when there are 1M names). Note that, POG reconstruction may happen only when there is a new name added to the network. \textit{Modifying a forwarding action of a existing name (or removing a name) never results in POG reconstruction.}
The line in Fig. \ref{fig:updateInsert} shows the average update speed (including the time overhead for reconstruction).
POG reconstruction only imposes minor impact on the update speed.

\textbf{Network-wide shared bipartite graph.}
    For some networks that require every switch to store all destination names such as Ethernet, the name set $S$ is identical for all switches in the network.
    Hence, all switches in the network may share the same $G$ and $\langle h_a, h_b \rangle$.
    Constructing and updating the FIBs in all switches only require computing $G$ once.
    e.g., the phase I of the $\mathtt{construct}$ procedure (Sec. \ref{sec:construct}) is only executed \textit{once} for FIBs of all switches in the network. This indicates that the construction time overhead for FIBs of multiple switches can be further reduced.
    Note that for a single switch, the time used for phase I is about half of the total of $\mathtt{construct}$.

\textbf{Communication overhead. }
We compute the entropy of the information included in update messages in Table \ref{tab:messagelength}.
The update message length grows logarithmically with respect to either the number of names $n$ or the number of actions.
The communication overhead of \sysn
is smaller than that of most OpenFlow operations.
\begin{table}[h]
    \centering

    \begin{tabular}{ccc}

        & $n\!=\!3\!\times\!10^5$  & $n\!=\!1.4\! \times\! 10^6$ \\
        &   $2^8$ actions & $2^{16}$ actions \\

\hline
{Name addition}& 75.2 & 107.2 \\
Action change & 65.6 & 88.8 \\
\hline
\end{tabular}
\caption{Entropy of one update message in bits}
\label{tab:messagelength}
\end{table}

\subsection{Prototype Implementation and Evaluation}

    \begin{figure}[t]
    \centering
    \includegraphics[width=\widthinTriColumn]{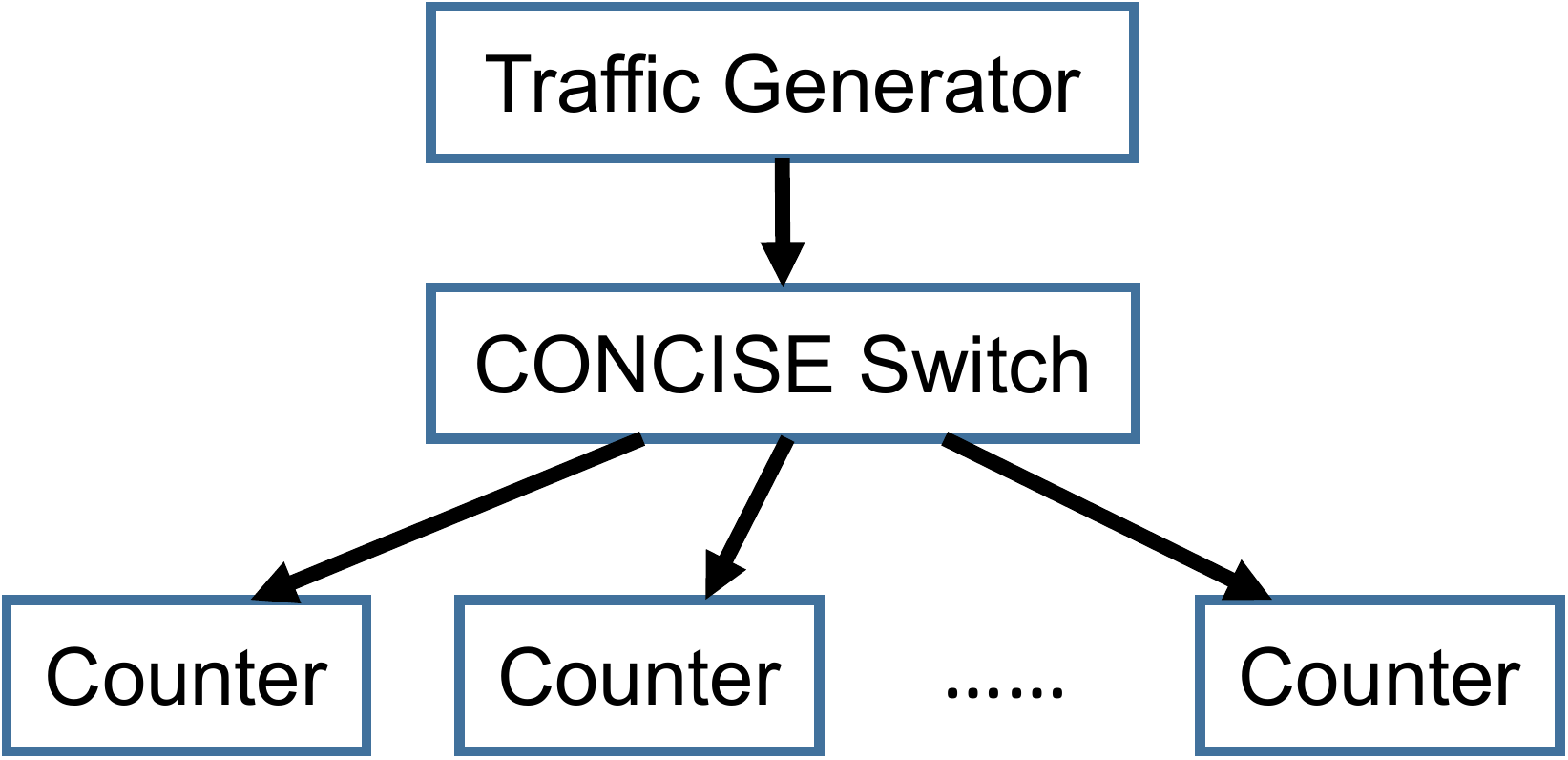}
    \crunchBeforeTriFigCap
    \caption{ \sysn prototype with Click modular router}
    \label{fig:Clicksys}
    \end{figure}

    \begin{figure}[t]
    \centering
    \includegraphics[width=\widthinTriColumn]{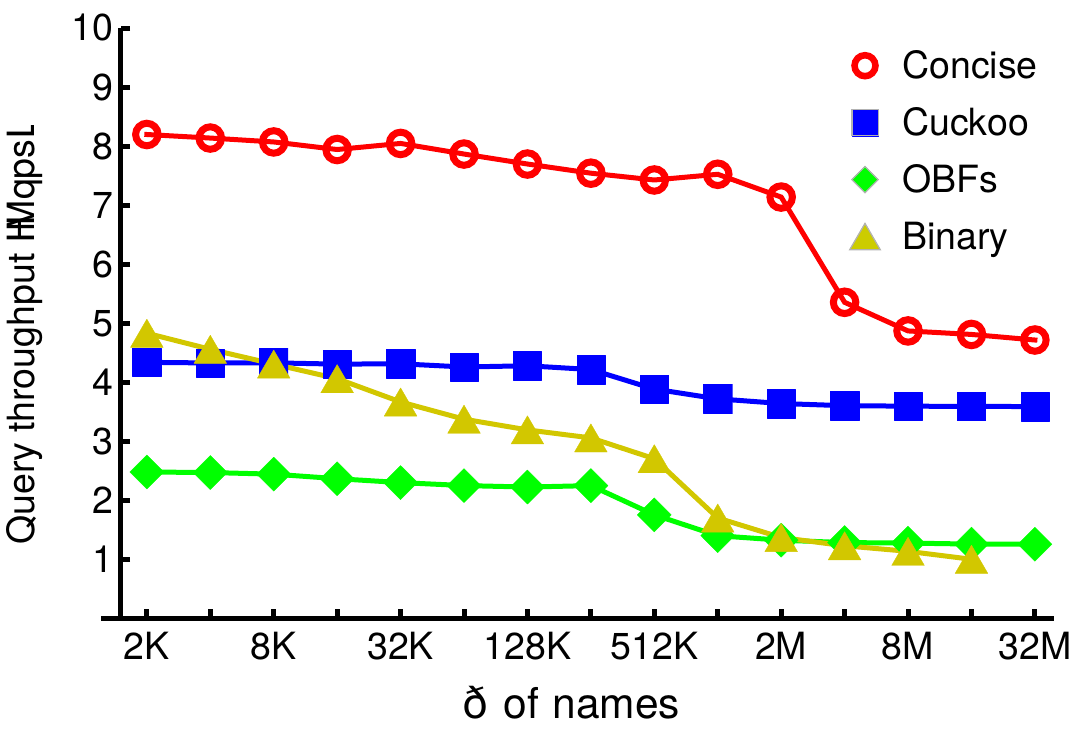}
    \crunchBeforeTriFigCap
    \caption{ Forwarding throughput comparison on Click}
    \label{fig:Click1}
    \end{figure}
\subsubsection{Implementation on Click}

We implement a \sysn prototype on Click Modular Router \cite{click}. The structure of the prototype system is as shown in Figure \ref{fig:Clicksys}.
It receives packets from one inbound port and forwards each packet to one of its several outbound ports.
Upon receiving a packet, it queries the POG using the address field of the packet, i.e., the name, and decides the outbound port of the packet.
In addition, we implement the \texttt{(2,4)}-Cuckoo hash table, OBFs, as well as the binary search mechanism on Click.
Figure \ref{fig:Click1} shows the forwarding throughput.
The Click modules in each evaluation includes one traffic generator generating packets with valid 64-bit names, one switch that executes queries on the FIB, and packet counters connected to the egress ports of the switch.
The experiments are conducted on one CPU core.

Results show that  \sysn  always has the highest throughput. When $n<2\text{M}$, \sysn is smaller than the cache size and the query throughput is about 2x as fast as Cuckoo and 4x as fast as OBFs. When $n\geq2\text{M}$, the throughput of \sysn is still the highest.
 Meanwhile, \sysn  uses much less  memory, about 10\% to 20\% of that of  Cuckoo, OBFs, and Binary.

\begin{figure}[t]
\centering
    \includegraphics[width=0.8\linewidth]{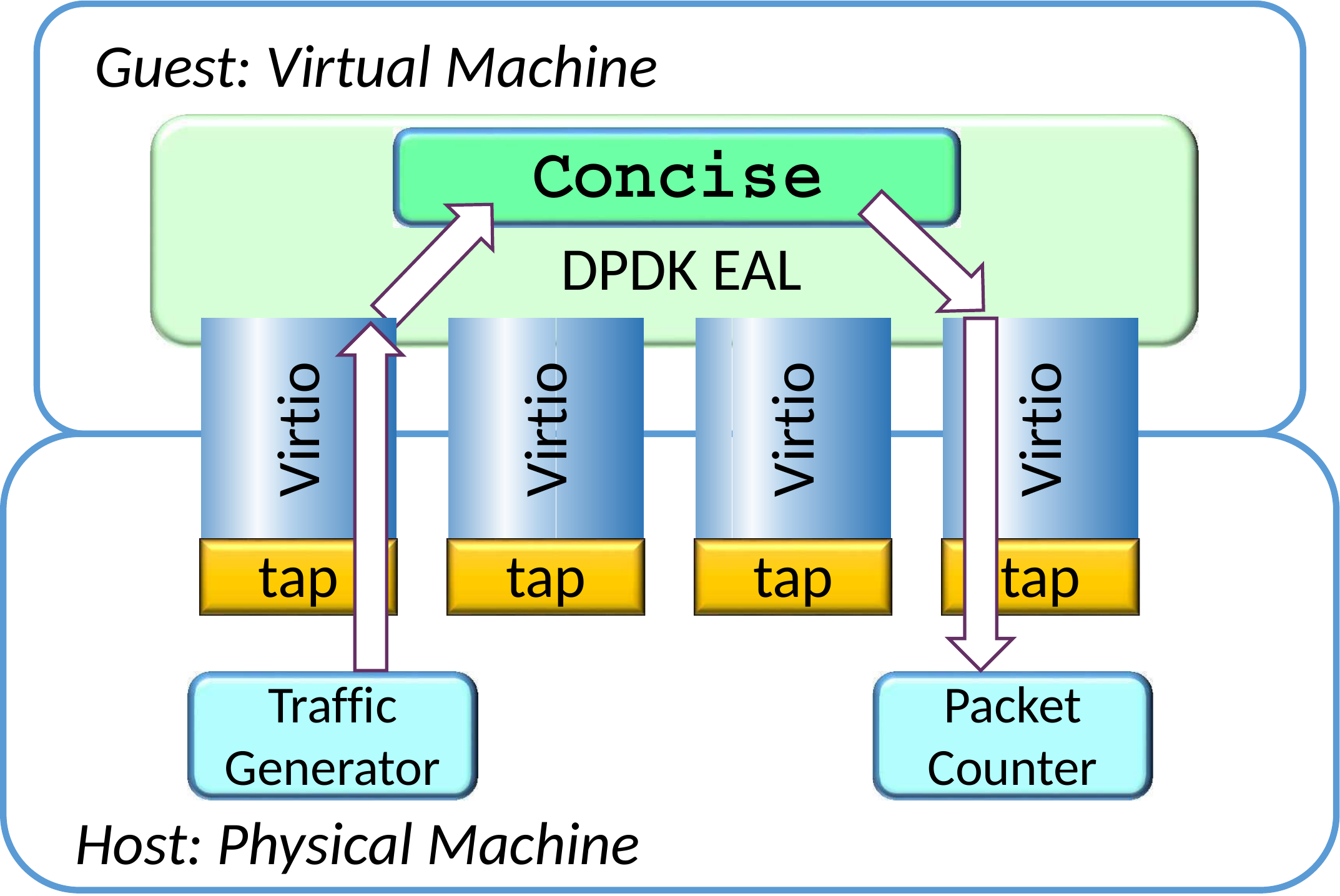}
    \crunchBeforeTriFigCap
    \caption{ \sysn prototype on DPDK}
  \label{fig:dpdksystem}
\end{figure}
\begin{figure}[t]
\centering\includegraphics[width=\widthinTriColumn]{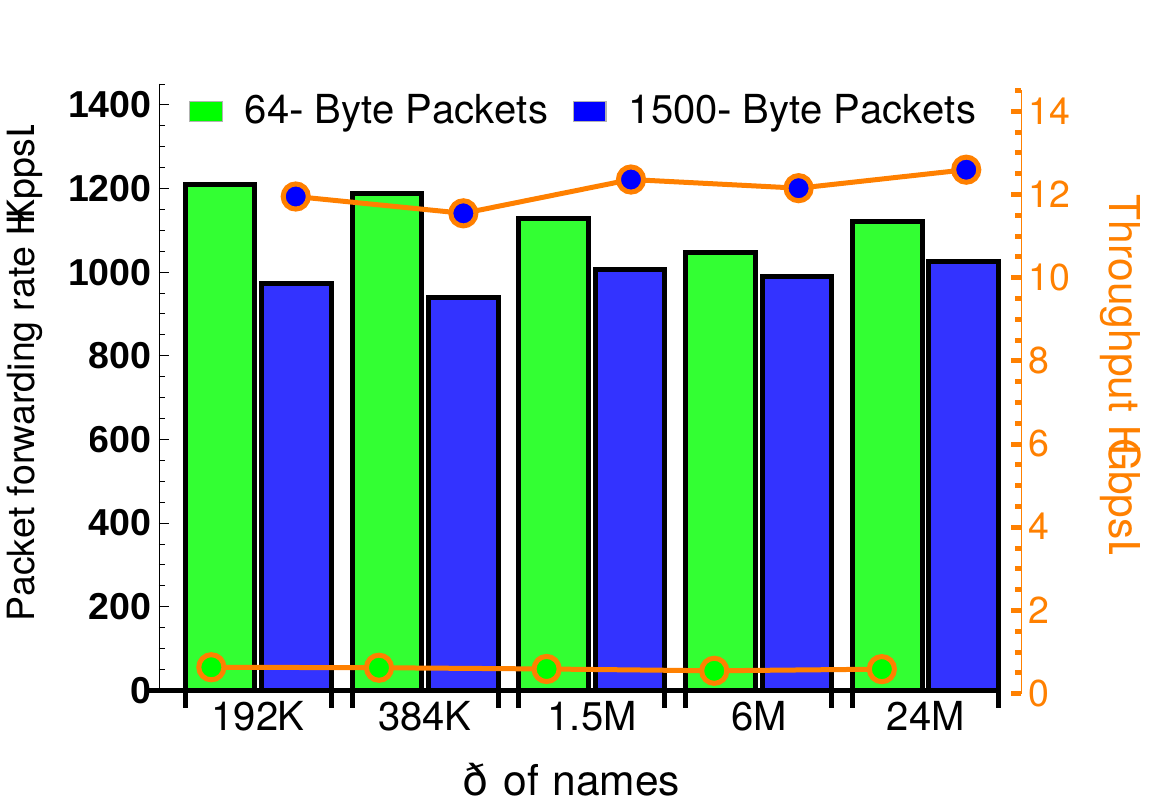}
    \centering
    \caption{ Performance of the \sysn prototype on DPDK }
    \label{fig:dpdkthrput}
    \end{figure}
    \begin{figure}[t]
    \centering
    \includegraphics[width=\widthinTriColumn]{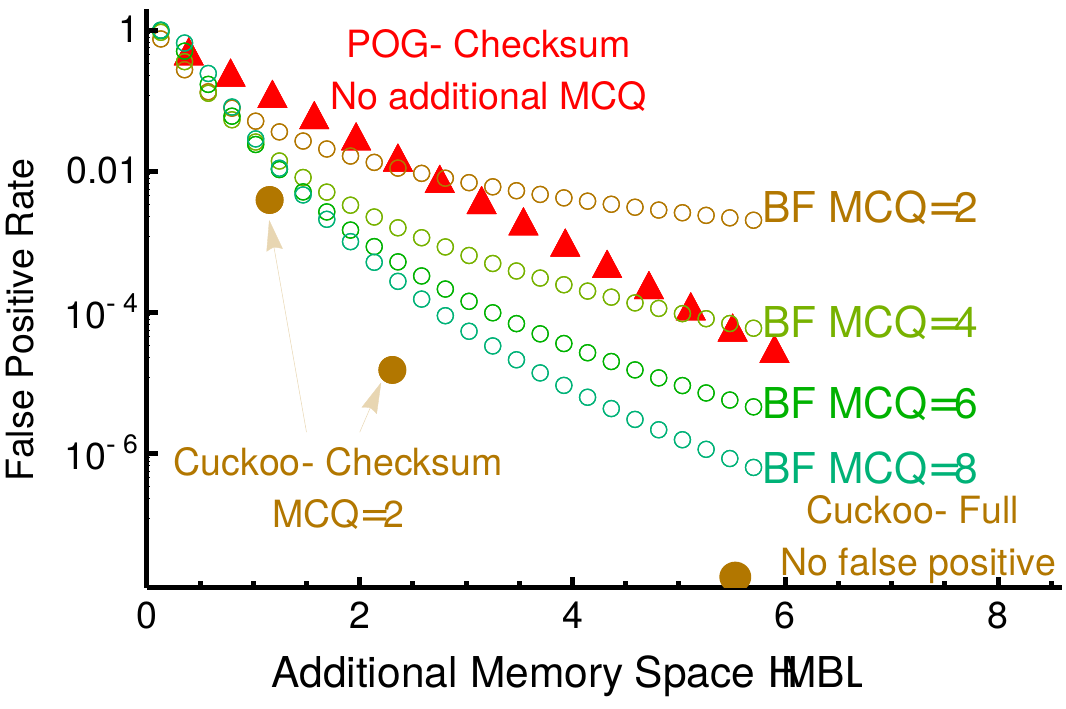}
    \crunchBeforeTriFigCap
\caption{ Approaches of detecting invalid names}
\label{fig:falsepositive}
\vspace{-2ex}
\end{figure}

\subsubsection{Implementation with DPDK}
We also build a \sysn prototype on the hardware Environment Abstraction Layer (EAL) provided by DPDK.
It maintains a POG query structure. The query structure is initialized during boot up and can be updated upon network dynamics.
The prototype reads packets from the inbound ports, executes queries on the query structure, and then forwards each packet to the corresponding outbound port.

We implement both the traffic generator and FIB application on the same commodity computer using virtualization techniques. As shown in Figure \ref{fig:dpdksystem}, we create a guest virtual machine (VM) on the host machine using KVM and Qemu to install \sysn. The VM is equipped with four virtio-based virtual network interface cards. Linux TAP kernel virtual devices are attached to the virtio devices on the host side. The programs running on the host machine communicate with the guest VM via the Linux TAPs.
On the host machine, we use a traffic generator program to send raw Ethernet packets to \sysn running on the VM. The host machine receives the forwarded packets from \sysn and counts the number of packets using default counters provided by the Linux system.

We measure the throughput of \sysn with different numbers of names.
The barchart in Figure \ref{fig:dpdkthrput} shows that \sysn is able to generate, forward, and receive more than 1M packets per second, for both 64-Byte and 1500-Byte packets.
The forward throughput is at least 12 Gbps for  1500-Byte Ethernet packets.
The throughput of Cuckoo is only 60\% to 80\% of the throughput of \sysn.
The forwarding throughput does not significantly change when the number of names grows or packet length changes.
This indicates that the impact of \sysn on the overall performance is so small that it is negligible compared to the other overheads. The bottleneck of this evaluation is on other parts of processing, e.g., data transmission between the host  machine and guest VM.
We expect a much higher throughput on physical NICs.
\section{Discussion}
\label{sec:dis}
\crunchBSS
\subsection{Deal with Alien Names}
\crunchASS
\label{sec:filter}
An alien name is a name that is not in $S$ during \sysn construction.
Querying an alien name may result in an arbitrary forwarding action. Compared to the forwarding table miss of Ethernet, which let the packets flood to all interfaces,\,\sysn causes no flooding.
Operators may choose one or some of the following mechanisms to detect the alien names.
\begin{itemize}
        \crunchitemize
\item At an ingress switch, every incoming packet should be checked by a filter or firewall to validate that its destination does exist in the network.  This filter can be implemented as a network function running on the border of the network, and can be integrated with the firewall. 
        \crunchitemize
\item  Maintain a Bloom filter at each of the switches. Packets with valid names pass this filter and are then processed by \sysn FIB. 
        \crunchitemize
    \item
        In addition to the $l$-bit query results, also maintain the checksums for each name in the \sysn FIBs.
        Adding checksums will increase the memory size of \sysn. For $r$-bit checksums, the overall memory cost of a query structure is $2(l+r)m+O(1)$. Note that as long as $l+r$ does not exceed the word length of the computing platform, the time overhead of all operations remains unchanged.
\crunchitemize
\end{itemize}

Assuming there are in total 1M names. Fig\,\ref{fig:falsepositive} compares the memory and computational overheads of the above approaches. 
The false positive rate can be controlled to be as low as $10^{-5}$ with $<2$MB memory overhead using the filter of Cuckoo with checksums.
The performance when using Bloom filters may vary depending on the parameters.
We also recommend to utilize the time-to-live (TTL) value of to prevent the packet being forwarded in the network forever.

The unique property of returning an arbitrary value for an alien name may also be useful for \sysn as a network load balancer: for a server-visiting flow that is new to the network, \sysn can forward it to one of the servers with adjustable weights.

\crunchBSS
\subsection{\sysnintitlelarge ~versus Cuckoo and SetSep}
\crunchASS
\label{sec:comparison}

\sysn is essentially a classifier for names, and each class represents a forwarding action. \sysn does not store the names.
 Cuckoo stores all names and actions in a key-value store.
 
  SetSep has some  properties  similar to \sysn.
Both of them do not store names and return meaningless results for unknown names.
In ScaleBricks \mbox{\cite{ScaleBricks}}, SetSep is only used as a separator to distribute the FIB to different computers, rather than the FIB. Meanwhile, the update scheme for SetSep is not explicitly explained \mbox{\cite{SetSepHotOS,ScaleBricks}}, and there is no discussion about handling dynamic FIB size growth.

In addition to the memory size results in Table 1, we show some comparison results of SetSep in what follows.
The construction speed of SetSep is slower than that of \sysn and Cuckoo by more than an order of magnitude:
10 seconds for one single FIB of 1M names in our experiments.
We also measure the update speed of SetSep without adding new names, which turns to be less than 10K/s ($<$ 1 \% of \sysn).
The query speed of SetSep is higher than that of Cuckoo. SetSep needs to compute $1+l$ hash values and read $2+2l$ values for each query.
We implement a static  SetSep with 1.4M names and $l=8$, using 2.19MB memory. Its query throughput is 211 Mqps using 4 threads. In comparison, \sysn with the same settings uses 4M memory and reaches 470 Mqps.

In addition, we summarize the reasons of the performance gain of \sysn as follows.
(1) \dtsn does \textit{not} maintain a copy of the names in the query structure. The memory size of the query structure is much smaller than the other solutions.
    \sysn demonstrates higher cache-hit rate, which leads to better performance on cache-based systems.
(2) The query procedure  does not contain any branches (e.g, $\mathtt{if}$ statements). This helps the CPU to predict and execute the instructions in the query procedure.
    (3) The efficient concurrency control mechanism further improves the query speed of \sysn.

\begin{TON}
\subsection{Example Use Case}
\sysn provides desired FIB properties for many current and future architecture designs that adopt flat names as mentioned in Sec. \ref{sec:intro}. We present a use case where it can be applied in a large enterprise network.

 A large enterprise or data center network may include up to millions of end hosts and more VMs \cite{B4}. In these networks, internal flows contribute to the most bandwidth, which can be forwarded by \sysn using destination names on Layer 2.
The destination of a packet  in this network can only be either a host or a gateway.
We require  hosts in the network voluntarily check the validity of the packets before sending them out. This can be easily achieved using software firewalls such as \textit{iptables}.

As of the gateway, we require it to execute two network functions: (1) For packets going out from the network, perform Layer 3 routing using the external IP of the destination. This is a basic function a router. (2) For packets going into the network, filter out all packets with invalid destinations. This can be implemented by a firewall. The packets will be forwarded using the Layer 2 names of the destinations. 
In addition, we require all packets in the network to carry a time-to-live (TTL) value to prevent packets from being forwarded forever in case packets with invalid names pass the firewalls.

\end{TON}

\section{Conclusion}
\label{sec:conclusion}
\sysn is a portable FIB design for network name lookups, which is developed based on a new algorithm Othello Hashing.
\sysn minimizes the memory cost of FIBs and moves the construction and update functionalities to the SDN controller.
We implement \sysn using three platforms.
According to our analysis and evaluation, \sysn uses the smallest memory to achieve the fastest query speed among existing FIB solutions for name lookups.
As a fundamental network algorithm, we expect that Othello Hashing will be used in a large number of network systems and applications where existing tools such as Bloom Filters and Cuckoo Hashing may not be suitable.


\bibliographystyle{abbrv}
\bibliography{paperof}
\normalsize


\begin{IEEEbiography}[{\includegraphics[width=1in,height=1.25in,clip,keepaspectratio]{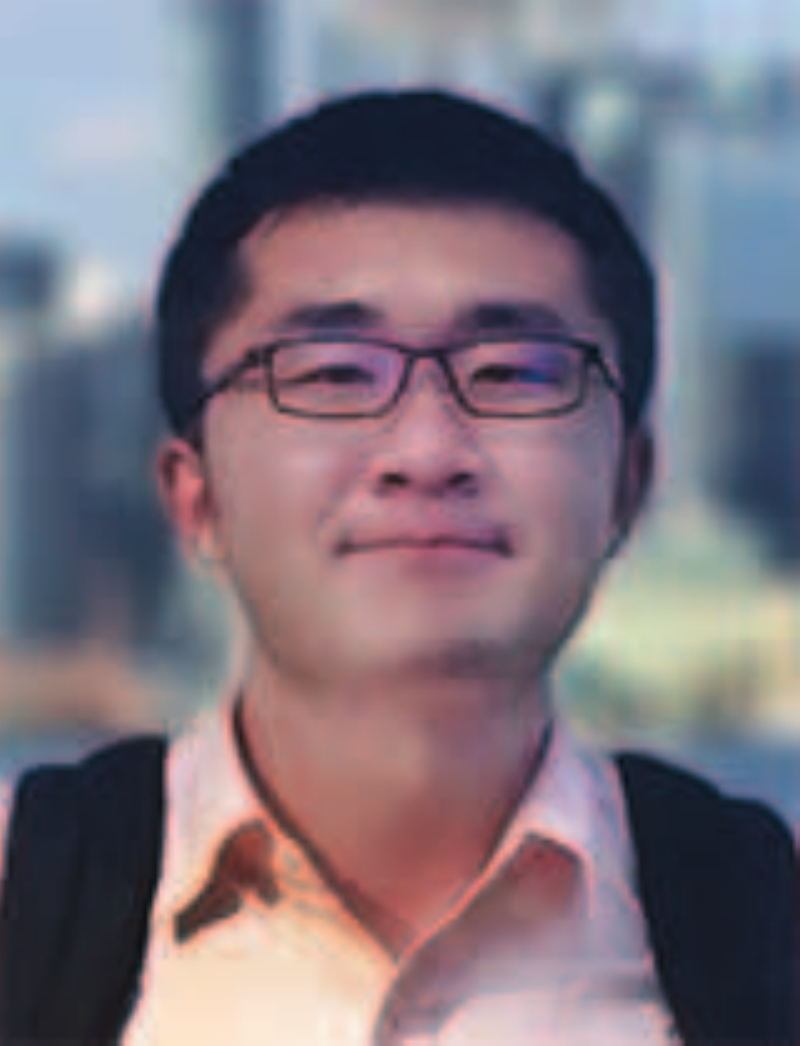}}]{Ye Yu}(S'13)
is a Ph.D. student at the Department of Computer Science, University of Kentucky. He received the B.Sc. degree  from Beihang University. His research interests including data center networks, software defined networking.  Especially, he is doing research about applications of fast and memory-effective hashing applications in computer networking systems and data storage.
\end{IEEEbiography}
\vspace{-5ex}
\begin{IEEEbiography}[{\includegraphics[width=1in,height=1.25in,clip,keepaspectratio]{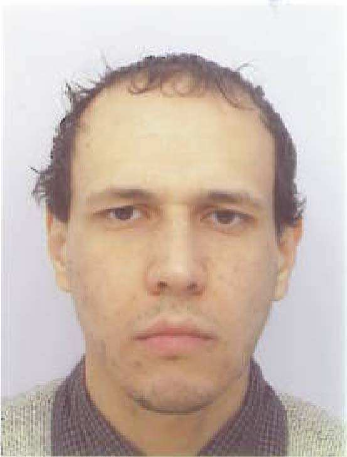}}]{Djamal}  Djamal Belazzougui is currently a researcher at CERIST research centre, Algeria.
He received an enginerring degree from the national high school
of Computer science, Algeria, and earned a Phd degree from Paris-VII,
Paris-Diderot university, France.
He subsequently spent three years as a postdoctoral researcher at the
University of Helsinki, Finland.
His research topics include hashing, succinct and compressed data
structures and string algorithms.
\end{IEEEbiography}
\vspace{-5ex}
\begin{IEEEbiography}[{\includegraphics[width=1in,height=1.25in,clip,keepaspectratio]{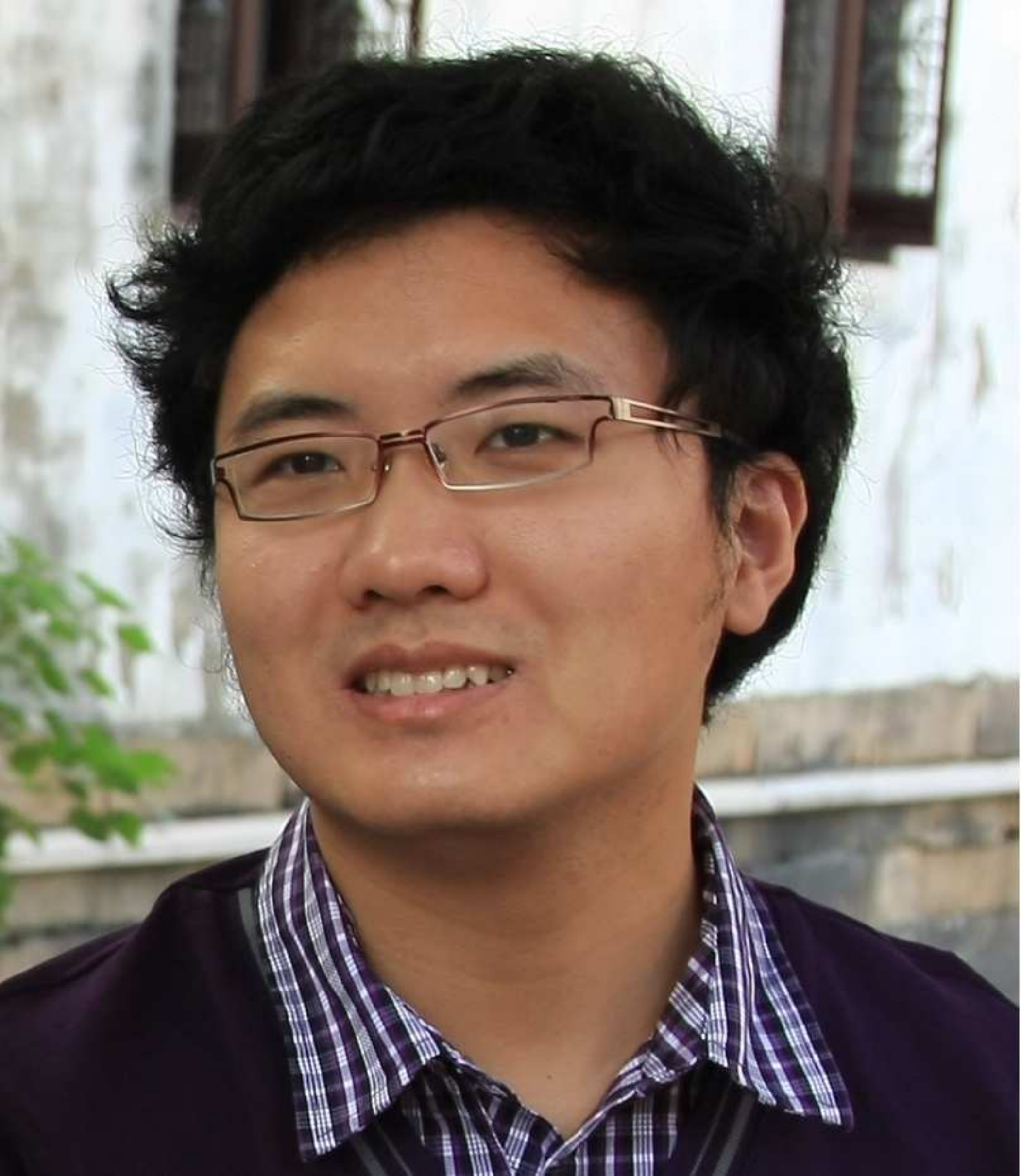}}]{Chen Qian} (M'08) is an Assistant Professor
at the Department of Computer Engineering, University of California Santa Cruz. He received the B.Sc. degree  from
Nanjing University in 2006, the M.Phil. degree  from the Hong Kong University of Science and Technology in 2008, and the Ph.D. degree
from the University of Texas at Austin in 2013, all in Computer Science. His research interests include computer networking, network security, and Internet of Things. He has published more than 60 research papers in highly competitive conferences and journals. He is a member of IEEE and ACM.
\end{IEEEbiography}
\vspace{-5ex}
\begin{IEEEbiography}[{\includegraphics[width=1in,height=1.25in,clip,keepaspectratio]{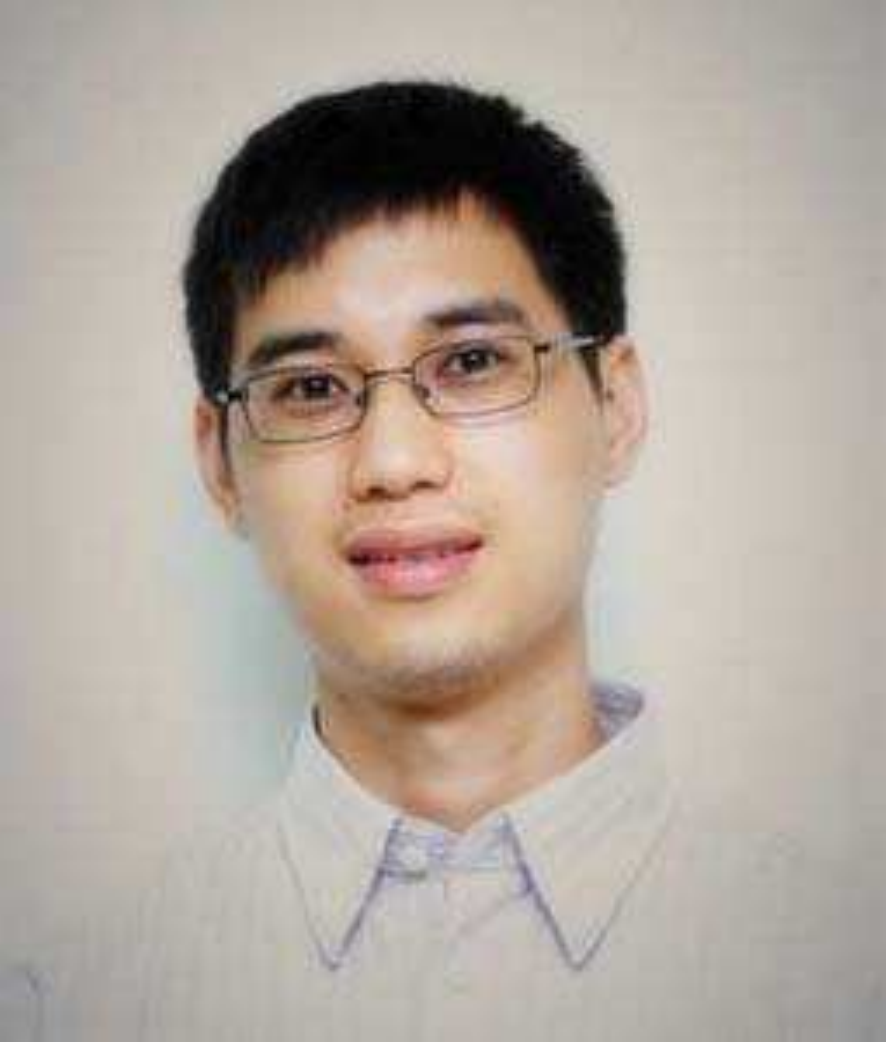}}]{Qin Zhang}  Qin Zhang is currently an Assistant Professor at Indiana University Bloomington.  He received the B.S. degree from Fundan University and the Ph.D. degre from Hong Kong University of Science and Technology.  He also spent a couple of years as a post-doc at Theory Group, IBM Almaden Research Center, and Center for Massive Data Algorithmics, Aarhus University.  He is interested in algorithms for big data, in particular, data stream algorithms, sublinear algorithms, algorithms on distributed data, I/O-efficient algorithms, and data structures.
\end{IEEEbiography}

\end{document}